\newcommand\ackname{Acknowledgements}
  \newenvironment{acknowledgements}{%
      \titlepage
      \null\vfil
      \@beginparpenalty\@lowpenalty
      \begin{center}%
        \bfseries \ackname
        \@endparpenalty\@M
      \end{center}}%
     {\par\vfil\null\endtitlepage}
  \newenvironment{acknowledgements}{%
      \if@twocolumn
        \section*{\abstractname}%
      \else
        \small
        \begin{center}%
          {\bfseries \ackname\vspace{-.5em}\vspace{\z@}}%
        \end{center}%
        \quotation
      \fi}
      {\if@twocolumn\else\endquotation\fi}
\title{Generalized Hamming weights for almost affine codes\footnote{The original publication is available at http://ieeexplore.ieee.org/document/7820189/}}
\author{Trygve Johnsen\thanks{ Dept. of Mathematics, UiT The Arctic University of Norway, N-9037 Troms{\o}, Norway, \texttt{Trygve.Johnsen@uit.no}}  \and Hugues Verdure\thanks{ Dept. of Mathematics, UiT The Arctic University of Norway, N-9037 Troms{\o}, Norway, \texttt{Hugues.Verdure@uit.no}} }
\newtheorem{definition}{Definition}
\newtheorem{proposition}{Proposition}
\newtheorem{corollary}{Corollary}
\newtheorem{theorem}{Theorem}
\newtheorem{example}{Example}
\newtheorem{remark}{Remark}
\newtheorem{lemma}{Lemma}
\newenvironment{proof}[1][Proof]{\begin{trivlist}
\item[\hskip \labelsep {\bfseries #1}]}{\end{trivlist}}
\newcommand{\rk}{\textrm{rk}}
\newcommand{\Fq}{\mathbb{F}_q}
\newcommand{\C}{\mathcal{C}}
\newcommand{\N}{\mathbb{N}}
\newcommand{\qed}{\nobreak \ifvmode \relax \else
      \ifdim\lastskip<1.5em \hskip-\lastskip
      \hskip1.5em plus0em minus0.5em \fi \nobreak
      \vrule height0.75em width0.5em depth0.25em\fi}
\begin{document}

\maketitle

\begin{abstract}\noindent 
We define generalized Hamming weights for almost affine codes.  We show that this definition is natural since we can extend some well known properties of the generalized Hamming weights for linear codes, to almost affine codes. In addition we discuss duality of almost affine codes, and of the smaller class of multilinear codes.
\noindent
Keywords: Block codes, Hamming weight, Kung's bound, profiles, wire-tap channel of type II.

\end{abstract}

\section{Introduction}

Let $C$ be an almost affine code as defined in \cite{SA},  that is: $C \subset F^n$ for some finite alphabet $F$, and the projection $C_X$ has cardinality $|F|^s$ for a non-negative integer $s$ for each $X \subset \{1,\cdots,n\}.$

It is well known (\cite{SA}) that $C$ defines a matroid $M_C$ through the rank function \[r(X)=\log_{|F|} |C_X|.\] Such codes were studied in connection with access structures over $E=\{1,2,\cdots,n\}$ and are strongly related to ideal perfect secret sharing schemes for such access structures. See e.g.~\cite{SA},~\cite{JM},~\cite{BD},~\cite{Ma}. Recently, almost affine codes have been used in network coding. See e.g.~\cite{WEH}

An important subclass of almost affine codes are linear codes over finite fields $\mathbb{F}_q$.  A bigger class consists of affine codes, which are translates of linear codes within their ambient space. Another class of codes strictly contained in the class of all almost affine codes, consists of multilinear codes (see Section~\ref{five} for the definition of multilinear codes).

In this paper we will study some well-known properties of linear codes over finite fields, and investigate to what extend they carry over to this bigger class of almost affine codes $C$.

We start by defining the Hamming weights of almost affine codes and show that the different characterizations of Hamming weights for linear codes apply to almost affine codes.

We carry on by investigating the possibility of defining in a natural way  a dual code $C^{\perp}$ of an almost affine code $C$. This turns out to be problematic in general, although the dual matroid of $M_C$ exists, so that we know what matroid structure $C^{\perp}$ should have induced, if it  had existed. For multilinear codes, however, there is a nice duality of codes, which matches that of the dual matroids.

We proceed to prove a version of Kung's theorem for almost affine codes, that is a formula for how many codewords it takes for their unions of supports to cover all of $E=\{1,2,\cdots,n\}.$ For linear codes this formula in formulated in terms of the minimum distance  of the dual code. In our case there is not necessarily a dual code, but we succeed in formulating a similar result, by using the associated matroid of the code. We also extend a recent generalization of Kung's theorem, given in \cite{JSV}, from linear codes to almost affine codes. Here we give formulas for how many codewords it takes for their unions of supports to cover subsets of $E=\{1,2,\cdots,n\}$ of specified cardinalities. To formulate this result we use the full set of Hamming weights for the matroid $M_C$.

At the end of the the paper, we look at two notions from linear codes that transpose nicely to almost affine codes, and that emphasize that our definition of Hamming weights is the right one. Namely, we look at dimension/length profiles of an almost affine code and its application to trellis decoding, and at the wire-tap channel of type II. In both cases, the Hamming weights of the code give an indication on how complex decoding will be, and how much information an intruder can get.

Our exposition contains several examples of almost affine codes that are not necessarily linear. Apart from a simple running example introduced in Example~\ref{running} below, we look at codes arising from a simple interleaving scheme (Section~\ref{five}) and folded Reed-Solomon codes (Section~\ref{VI}). The way almost affine codes
arise in a natural way from ideal perfect secret sharing schemes is also explained, in Section~\ref{IB}

\subsection{Notation and known results}

\subsubsection{Matroids} A matroid is a combinatorial structure that extend the notion of dependency. There are many equivalent definitions for matroids, but we give just one here. We refer to~\cite{O} for a complete overview of the theory of matroids, and we use the notation from~\cite{O}.

\begin{definition}
A matroid $M$ is a pair $(E,\rho)$ where $E$ is a finite set and $\rho : 2^E \rightarrow \N$ is a function satisfying \begin{description} \item[(R1)] $\rho(\emptyset)=0,$\item[(R2)] If $X \subset E$ and $x \in E$, then \[\rho(X) \leqslant \rho(X \cup \{x\}) \leqslant \rho(X)+1.\]\item[(R3)] If $X \subset E$, $x,y \in E$ and \[\rho(X)=\rho(X \cup \{x\}) =\rho(X \cup \{y\})\] then \[\rho(X \cup \{x,y\})=\rho(X).\] \end{description}
A basis of the matroid is a subset $X \subset E$ such that $|X|=\rho(X)=\rho(E)$, while a circuit is a minimum subset of $X \subset E$ (for inclusion) satisfying $\rho(X)=|X|-1$. The nullity function is the function \[n(X)=|X|-\rho(X).\] The rank of the matroid is $\rho(E)$.
\end{definition}

\begin{remark}If $C$ is a $[n,k]$ linear code over a finite field $\Fq$, we can associate to it a matroid $M(C)$ in the following way: let $H$ be a parity check matrix of the code. Then $E= \{1,\cdots,n\}$ and the rank function is given by \[\rho(X) = \rk_{\Fq} H_X\] for $X \subset E$, where $H_X$ is the submatrix of $H$ obtained by keeping the columns indexed by $X$. It can be proved that this matroid does not depend on the parity check matrix.\end{remark}

Every matroid $M$ admits a dual matroid $M^*$ on the same ground set and with rank function \[\rho^*(X) = |X|+\rho(E \- X) - \rho(E).\] Of course, $\left(M^*\right)^*=M$.

A notion that will be used later is the fundamental circuit of an element with respect to a basis~\cite[Corollary 1.2.6]{O}: 
\begin{definition}\label{fundamental}
If $B$ is a basis and $e \in E\-B$, then there exists a unique circuit $X$ such that $X \subset B \cup \{e\}$. This circuit will be denoted $\sigma(B,e)$ in the sequel.
\end{definition}

In~\cite[Theorem 2]{W}, Wei generalizes the notion of minimum distance of linear codes (the generalized Hamming weights), and this can be further extended to matroids in general (\cite{JV}):

\begin{definition} Let $M$ be a matroid of rank $k$ on the ground set $E$, and let $n$ be its nullity function. Then the generalized Hamming weights are \[d_i(M)=\min\{|X|, n(X)=i\} \textrm{ for } 1 \leqslant i \leqslant |E|-k.\]
\end{definition}

Notice that the generalized Hamming weights for a matroid are a strictly increasing function of $i$.

In the same way, we can define the generalized Hamming weights for the dual matroid $M^*$. These are related by Wei duality, first proved in~\cite[Theorem 3]{W} for linear codes, and then generalized in~\cite{L} (in Norwegian) and also in~\cite[Theorem 5]{BJMS}, where one may disregard the partial ordering  $P$ appearing in that theorem since we now are considering the case where $P$ is trivial (antichain):

\begin{proposition}\label{weis}

The  $d_i(M)$ and the  $d_i(M^*)$ satisfy Wei duality: \[\{d_1(M),\cdots,d_{n-k}(M)\} \textrm{ } \cup \] 
$$\{n+1-d_{k}(M^*),\cdots,n+1-d_1(M^*)\}=\{1,2,\cdots,n\}$$ where $n=|E|$.
\end{proposition}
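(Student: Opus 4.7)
The plan is to recode each side of the identity as the jump set of a suitable monotone step function on $\{0,1,\ldots,n\}$, and then use the formula $\rho^*(X)=|X|+\rho(E\setminus X)-k$ to see that these jump sets partition $\{1,\ldots,n\}$.

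First I introduce the auxiliary function $r(j)=\min\{\rho(X):X\subseteq E,\ |X|=j\}$ for $j=0,1,\ldots,n$. Axiom (R2) yields $r(j+1)-r(j)\in\{0,1\}$, with $r(0)=0$ and $r(n)=k$. Setting $m(j)=j-r(j)$, one checks $m(j)=\max\{n(X):|X|=j\}$, so $m$ increases from $0$ to $n-k$ in unit steps. Since removing a single element changes nullity by at most $1$, any set with $n(X)\geq i$ contains a subset of nullity exactly $i$ and no larger size; combined with the monotonicity of $m$ this gives $d_i(M)=\min\{j:m(j)\geq i\}$. Consequently $\{d_1(M),\ldots,d_{n-k}(M)\}$ coincides with the jump set $J_m=\{j:m(j)>m(j-1)\}$, equivalently $\{j:r(j)=r(j-1)\}$.

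Next I turn to $M^*$. From $\rho^*(X)=|X|+\rho(E\setminus X)-k$ one computes $n^*(X)=k-\rho(E\setminus X)$, and hence $m^*(j):=\max\{n^*(X):|X|=j\}=k-r(n-j)$. Therefore $m^*(j)>m^*(j-1)$ if and only if $r(n-j+1)>r(n-j)$, which means $J_{m^*}=\{n+1-\ell:\ell\in J_r\}$ where $J_r=\{j:r(j)>r(j-1)\}$. The analogous argument applied to $M^*$, whose rank is $n-k$, shows $\{d_1(M^*),\ldots,d_k(M^*)\}=J_{m^*}$, so reflecting once more yields $\{n+1-d_i(M^*):1\leq i\leq k\}=J_r$.

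Finally, for every $j\in\{1,\ldots,n\}$ exactly one of $r(j)=r(j-1)$ or $r(j)=r(j-1)+1$ holds, so $J_m$ and $J_r$ form a disjoint partition of $\{1,\ldots,n\}$, of sizes $n-k$ and $k$ respectively. Combined with the identifications above, this delivers the claim. The main obstacle I anticipate is the bookkeeping around the dual: verifying $n^*(X)=k-\rho(E\setminus X)$ from the rank axiom form of $\rho^*$, and tracking how the reflection $j\mapsto n+1-j$ transports jumps of $r$ into jumps of $m^*$ without an off-by-one slip.
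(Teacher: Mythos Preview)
Your proof is correct. The paper does not actually prove this proposition; it simply quotes the result from the literature (Wei's original theorem for linear codes, and the matroid generalizations in \cite{L} and \cite{BJMS}). So there is no ``paper's own proof'' to compare against.

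Your argument via the step function $r(j)=\min\{\rho(X):|X|=j\}$ is a clean, self-contained route: the key observations---that $r$ increases in unit steps from $0$ to $k$, that the $d_i(M)$ are exactly the indices where $m(j)=j-r(j)$ jumps (i.e.\ where $r$ is flat), and that the dual rank formula converts jumps of $m^*$ into reflected jumps of $r$---are all verified correctly, including the off-by-one bookkeeping you flagged. The only place where the writing could be tightened is the sentence ``any set with $n(X)\geq i$ contains a subset of nullity exactly $i$ and no larger size''; the intended meaning (remove elements one at a time, nullity drops by $0$ or $1$ at each step, so hit $i$ exactly at some subset of size $\leq |X|$) is right, but the phrase ``and no larger size'' is awkward and could be read as a separate claim. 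Otherwise the argument is complete.
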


\subsubsection{Almost affine codes}\label{IB} We refer to~\cite{SA} for an introduction to almost affine codes, and will mainly use their notation. We give here the main definitions, and the result that will be used in the sequel. 

\begin{definition}An almost affine code on a finite alphabet $F$, of length $n$ and dimension $k$ is a subset $C \subset F^n$ such that $|C|=|F|^k$ and such that for every subset $X \subset \{1,\cdots,n\}$, \[\log_{|F|}|C_X| \in \N,\] where $C_X$ is the puncturing of $C$ with respect to $\{1,\cdots,n\}\-X$.

The code $C$ is non-degenerate when it is of effective length $n$, that is, when $\forall x \in \{1,\cdots,n\}$, $\log_{|F|}|C_{\{x\}}| > 0.$

An almost affine subcode of $C$ is a subset $D \subset C$ which is itself an almost affine code on the same alphabet.
\end{definition}

To any almost affine code $C$ of length $n$ and dimension $k$ on the alphabet $F$, we can associate a matroid $M_C$ on the ground set $E=\{1,\cdots,n\}$ and with rank function \[r(X) = \log_{|F|} |C_X|,\] for $X \subset E$.

It is easily checked that this is the rank function of a matroid. The first axiom is trivial. The second axiom comes from the fact that a new coordinate position either leaves the number of codewords unchanged, or increases it by a factor $|F|$. The third axiom comes from the fact that if the number of codewords do not increase when we add new coordinate positions $x$ or $y$, then it does not increase when we add both.

\begin{remark} Obviously, any linear code $C$ over the field $\Fq$ is an almost affine code on the alphabet $\Fq$. We have two matroids associated to this code, namely $M(C)$ and $M_C$. Unfortunately, they are different, but they remain related, since they are dual of each other. We have namely \[M_C = M(C)^* = M(C^\perp)\] where $C^\perp$ is the dual linear code of $C$, that is the orthogonal complement of $C$.
\end{remark}

\begin{example}\label{running}We will use a running example throughout this paper. It is the almost affine code $C'$ in~\cite[Example 5]{SA}. It is a code of length $3$ and dimension $2$ on the alphabet $F=\{0,1,2,3\}$. Its set of codewords is \begin{align*} 000 && 011 && 022 && 033\\101&&112&&123&&130 \\ 202&&213&&220&&231\\303&&310&&321&&332\end{align*} Its matroid is the uniform matroid $U_{2,3}$ of rank $2$ on $3$ elements. Namely, $r(\{1,2,3\}) = \log_416 = 2$ while for any $X \subsetneq \{1,2,3\}$, it is is easy to see that $C'_X=F^{|X|}$ so that $r(X)=|X|.$ This is an example of an almost affine code which is not equivalent to a linear code, and not even to a multilinear code. \end{example}

When talking about the support of a codeword in a linear code, one implicitly makes reference to the zero-codeword. Such a "canonical" codeword does not generally exist in almost affine codes, so we are bound to specify the codeword we compare to in almost all our definitions.

\begin{definition} Let $C$ be a block code of length $n$, and let $\bm{\tilde{c}} \in C$ be fixed. The $\bm{\tilde{c}}$-support of any codeword $\bm{c}$ is \[Supp(\bm{c},\bm{\tilde{c}}) = \{i, \bm{c}_i \neq \bm{\tilde{c}}_i\}.\]
\end{definition}

Even if this is defined using a fixed codeword $\bm{\tilde{c}}$, it is shown in~\cite{SA}, that many quantities defined for almost affine codes do not depend on the codeword $\bm{\tilde{c}}$ used, but just on the matroid associated to the code. We mention, among other definitions and results taken from~\cite{SA}:

\begin{definition} \label{fixed}
Let $C$ be an almost affine code of length $n$, and let $\bm{\tilde{c}} \in F^n$ be fixed. Then \[C(X,\bm{\tilde{c}}) = \{\bm{c} \in C,\ \bm{c}_X = \bm{\tilde{c}}_X\},\] where $\bm{c}_X$ is the projection of $\bm{c}$ to $X$.
\end{definition} 

\begin{proposition}\label{prop2} Let $C$ be an almost affine code of length $n$ and dimension $k$ on the alphabet $F$. Let $\bm{\tilde{c}} \in C$. Let $X \subset \{1,\cdots,n\}$. Then $C(X,\bm{\tilde{c}})$ is an almost affine subcode of $C$, and moreover, \[|C(X,\bm{\tilde{c}})| = |F|^{k-r(X)}\] where $r$ is the rank function of the matroid $M_C$.
\end{proposition}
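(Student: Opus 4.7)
The plan is to reduce both claims to the single statement that every fiber of the projection $\pi_X : C \to C_X$ has the same cardinality, which by counting must then equal $|F|^{k-r(X)}$. Granting this, the cardinality formula is immediate, since $C(X,\bm{\tilde{c}})$ is exactly the fiber of $\pi_X$ above $\bm{\tilde{c}}_X$.

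I first handle the single-coordinate version: for an almost affine code $D$ on $F$ of dimension $\ell$ with coordinate set $E'$, and any $y \in E'$, the projection $D \to D_{E' \setminus \{y\}}$ has ratio of cardinalities $|F|^{\ell - r_D(E' \setminus \{y\})}$, which by axiom (R2) is either $1$ or $|F|$. Every fiber is a subset of $F$, hence has size at most $|F|$. If the ratio is $1$, the map is a bijection and every fiber has size $1$; if the ratio is $|F|$, the average fiber size equals $|F|$ and the upper bound $|F|$ forces every fiber to attain this value.

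To pass to a general $X \subset E$, I delete the elements of $E \setminus X$ one at a time, producing a chain $C = C_{E_0} \to C_{E_1} \to \cdots \to C_{E_s} = C_X$. Each intermediate $C_{E_i}$ is itself almost affine, because $|(C_{E_i})_Z| = |C_Z|$ is a power of $|F|$ for every $Z \subset E_i$. Applying the single-coordinate step at each stage, and using that a composition of surjections with uniformly sized fibers again has uniformly sized fibers (of size the product of the individual fiber sizes), every fiber of $\pi_X$ has the same size. Counting then forces this common size to be $|F|^{k-r(X)}$.

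For the subcode claim, fix $Y \subset E$ and consider the projection $C(X,\bm{\tilde{c}}) \to C(X,\bm{\tilde{c}})_Y$. Its fiber over a point $u$ equals $C(X \cup Y, \bm{\bar{c}})$ for any $\bm{\bar{c}} \in C(X,\bm{\tilde{c}})$ that projects to $u$ on $Y$; by the cardinality formula already established, applied at $X \cup Y$, this fiber has $|F|^{k - r(X \cup Y)}$ elements. Dividing yields $|C(X,\bm{\tilde{c}})_Y| = |F|^{r(X \cup Y) - r(X)}$, a power of $|F|$, so $C(X,\bm{\tilde{c}})$ is indeed almost affine (with induced rank function the contraction $Y \mapsto r(X \cup Y) - r(X)$ of $M_C$). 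The main obstacle is the single-coordinate step: averaging alone will not pin down fiber sizes, and the essential point is that each fiber sits inside $F$, so the upper bound $|F|$ is what upgrades the average to a uniform value.
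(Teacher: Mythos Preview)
The paper does not prove this proposition; it is quoted from \cite{SA} (Simonis--Ashikhmin), so there is no in-paper argument to compare against. Your proof stands on its own and is correct.

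Your single-coordinate step is the crux and is sound: the projection $D \to D_{E'\setminus\{y\}}$ is surjective with fibers injecting into $F$, so each fiber has size between $1$ and $|F|$; since the average fiber size $|D|/|D_{E'\setminus\{y\}}|$ is a power of $|F|$ lying in $[1,|F|]$, it is either $1$ or $|F|$, and in either case the bound forces uniformity. The chaining argument is clean, and your observation that $(C_{E_i})_Z = C_Z$ for $Z \subset E_i$ is exactly what makes each intermediate puncturing almost affine. The subcode verification via $|C(X,\bm{\tilde{c}})_Y| = |F|^{r(X\cup Y)-r(X)}$ is also correct and, as you note, identifies the matroid of $C(X,\bm{\tilde{c}})$ with the contraction $M_C/X$ restricted appropriately. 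This is essentially the argument in \cite{SA}, so your approach matches the original source even though the present paper omits it.
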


\begin{corollary}\label{surbase}
If $B$ is a basis of $M_C$, then given any tuple $\bm{w} \in F^B$, there exists a unique word $\bm{w}' \in C$ such that $\bm{w}'|_B = \bm{w}$.
\end{corollary}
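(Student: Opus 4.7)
The plan is to show that the projection map $\pi_B : C \to F^B$ defined by $\bm{c} \mapsto \bm{c}|_B$ is a bijection. Existence of $\bm{w}'$ will correspond to surjectivity, and uniqueness to injectivity, and both will follow from a simple cardinality argument using the axioms built into the definition of an almost affine code.

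First I would unwind the hypothesis: since $B$ is a basis of $M_C$, we have $|B| = r(B) = r(E) = k$. By the definition of the rank function of $M_C$, this gives $|C_B| = |F|^{r(B)} = |F|^k$. Since by definition $C_B \subset F^B$ and $|F^B| = |F|^{|B|} = |F|^k$, equality of cardinalities forces $C_B = F^B$. This proves existence: every $\bm{w} \in F^B$ is the restriction to $B$ of some $\bm{w}' \in C$.

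For uniqueness, I would invoke Proposition~\ref{prop2} with $\bm{\tilde{c}} = \bm{w}'$. It gives
\[
|C(B,\bm{w}')| = |F|^{k - r(B)} = |F|^{0} = 1,
\]
so $\bm{w}'$ is the only codeword whose restriction to $B$ equals $\bm{w}$. Alternatively one can argue purely by counting: $\pi_B$ is a surjection from $C$ to $F^B$ between finite sets of equal cardinality $|F|^k$, hence a bijection.

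There is no real obstacle here; the only point to be slightly careful about is that Proposition~\ref{prop2} requires $\bm{\tilde{c}} \in C$, which is why one must first establish existence of $\bm{w}'$ before applying it for uniqueness (or else bypass it using the cardinality argument above).
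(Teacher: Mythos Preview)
Your proof is correct and takes essentially the same approach as the paper: existence via $C_B = F^B$ (the paper asserts this ``by definition of a basis'', which you unwind as a cardinality argument), and uniqueness via Proposition~\ref{prop2} applied with $r(B)=k$. The only difference is that you spell out the cardinality reasoning behind $C_B = F^B$ explicitly and offer the alternative bijection argument, while the paper's proof is a terse two-line version of the same idea.
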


\begin{proof}
Such a word exists since by definition of a basis, $C_B = F^B$, and it is unique by the previous proposition, since $r(B)=k$.
\end{proof}

In the sequel, some proofs can be made clearer if one uses a equivalent code instead. Two block codes $C$ and $C'$ of length $n$ on alphabets $F$ and $F'$ respectively are equivalent if there exists a permutation $\sigma \in S_n$ and bijections $\tau_i:F \rightarrow F'$ for $1 \leqslant i \leqslant n$ such that $C'$ is the result of applying $\tau_i$ to the symbols in position $i$ for all words in $C$, for $1 \leqslant i \leqslant n$, followed by permuting the $n$ digits of each word according to $\sigma$.

It is obvious that a code equivalent to an almost affine code is almost affine too. It will be obvious in the sequel that it will be enough to prove the properties we want to prove for an equivalent almost affine code. Then we can assume that the alphabet is $F=\{0,\cdots,q-1\}$, that $\{1,\cdots,k\}$ is a basis of the matroid associated to the code, and that the word $(0,\cdots,0) \in C$.

\subsection{The relation with access structures and ideal perfect sharing schemes}

The interest in almost affine codes has arisen in a natural way in connection with 
secret sharing schemes and their associated access structures. The connection with these structures is thoroughly explained for 
example in \cite{SA}, and we briefly recollect some central elements, to motivate our study of almost affine codes. We essentially follow the exposition in \cite{SA}.

Let $E_1=\{2,3,\cdots,n\}$ be a set of $n-1$ participants, for an integer  $n \ge 2$.

\begin{definition}$ $
\begin{itemize}
\item An access structure over $E_1$ is a set $\Gamma$ of subsets of $E_1$, such that $A \in \Gamma$ and $A \subset B$ implies $B \in \Gamma$.
\item For an access structure $\Gamma$ we let  $\Gamma_0$ denote  the set of minimal elements of $\Gamma$.
\item The access structure $\Gamma$ is said to be connected if the union of the sets in $\Gamma_0$ is all of $E_1$.
\end{itemize}
\end{definition}

Let $F$ be a finite set of secrets, and denote by $q$ its cardinality. A perfect secret sharing scheme for the access structure
$\Gamma$  is a method of distributing shares to the participants in such a way that all groups of
participants in $\Gamma$ can retrieve the secret, but no other group has any a posteriori information
about the secret. A perfect secret sharing scheme is said to be ideal if the share set for each
participant is equal to the set of secrets $F$. In mathematical terms:

\begin{definition}
Set $E= \{1,2,\cdots, n\}$, and denote by $\overline{A}$ the set $A \cup \{1\}$, for any $A \subset E_1.$
An ideal perfect secret sharing scheme for the access structure $\Gamma$ is a subset $\mathcal{C} \subset F^E (=F^n))$
such that:
\begin{itemize}
\item $C_{\{i\}}=F$, for $i=1,\cdots,n.$ 
\item $|\mathcal{C}_{\overline{A}}| = |\mathcal{C}_{A}|,$ for all $A \in \Gamma.$
\item  $|\mathcal{C}_{\overline{A}}| = q |\mathcal{C}_{A}|,$ for all $A$ not contained in  $\Gamma.$
\end{itemize}
\end{definition}

It is then clear that if you start with a non-degenerate almost affine code $\mathcal{C} \subset F^n$, then $\mathcal{C}$
 is a  ideal secret sharing scheme for the access structure $\Gamma_{\mathcal{C}}$ defined by

$(\Gamma_{\mathcal{C}})_0 = \{A \subset E_1 | \overline{A}$ is a circuit in $M_{\mathcal{C}}\}$

\begin{definition}
A matroid with ground set $E$ is connected if every subset of $E$ of cardinality $2$ is contained in  a circuit.
\end{definition}
It is then clear that the access structure $\Gamma_{\mathcal{C}}$  is connected if and only if the matroid $M_\mathcal{C}$ 
 is connected.

We also have (\cite{BD}):

\begin{proposition} 
An ideal perfect secret sharing scheme for a connected access structure is an almost affine code.
\end{proposition}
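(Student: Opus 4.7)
My plan is to show $\rho(X) := \log_q |\mathcal{C}_X| \in \mathbb{N}$ for every $X \subset E$ (where $q = |F|$); this integrality is exactly the almost-affineness condition. The real-valued bound $0 \leq \rho(X \cup \{x\}) - \rho(X) \leq 1$ is automatic (a projection can at worst multiply the image size by a factor $q$), so if each single-element increment is an integer, then telescoping along any maximal chain $\emptyset = X_0 \subsetneq X_1 \subsetneq \cdots \subsetneq X$ delivers the conclusion. The axioms of an ideal perfect secret sharing scheme supply integrality for the increments that add position $1$: for $A \subset E_1$, the difference $\rho(\overline{A}) - \rho(A)$ equals $0$ when $A \in \Gamma$ and $1$ when $A \notin \Gamma$. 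So the whole task reduces to proving integrality of $\rho(A \cup \{j\}) - \rho(A)$ for $A \subset E_1$ and $j \in E_1 \setminus A$.

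I would then case-split on the qualification statuses of $A$ and $A \cup \{j\}$. By upward-closure of $\Gamma$, the only possible mixed case is $A \notin \Gamma$ and $A \cup \{j\} \in \Gamma$, and it falls out immediately: combining $\rho(\overline{A}) = \rho(A) + 1$, $\rho(\overline{A \cup \{j\}}) = \rho(A \cup \{j\})$, and the real-valued bound $\rho(\overline{A \cup \{j\}}) - \rho(\overline{A}) \in [0,1]$ forces $\rho(A \cup \{j\}) - \rho(A) = 1$. The two remaining cases (both $A$ and $A \cup \{j\}$ qualified, or both non-qualified) constitute the main obstacle: there the two secret-related axioms apply symmetrically on both sides and cancel out, leaving an unanchored real-valued difference with no secret-based lever to exploit.

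This is where the connectedness of $\Gamma$ must enter. For the given $j \in E_1$, connectedness furnishes a minimal qualified set $B \in \Gamma_0$ containing $j$, and by minimality $B \setminus \{j\} \notin \Gamma$. The plan is to leverage $B$ to compare $A$ with $A \cup \{j\}$ through intermediate sets along which qualification status does change, extract integer increments at those transitions via the mixed-case trick, and reconcile the contributions into integrality of the single step $\rho(A \cup \{j\}) - \rho(A)$. The hard part is that a naive telescoping introduces additional non-integer-increment steps whose contributions must be canceled exactly, and one cannot shortcut via submodularity of $\rho$ (which fails on general $\mathcal{C} \subset F^n$). Making this bookkeeping rigorous is the content of the Brickell--Davenport theorem~\cite{BD}, which accomplishes the reconciliation by entropy-theoretic arguments applied to the uniform distribution on $\mathcal{C}$; once integrality of $\rho$ is established, the matroid axioms (R1)--(R3) follow immediately, and $\mathcal{C}$ is almost affine.
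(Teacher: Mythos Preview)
The paper does not prove this proposition; it simply attributes it to Brickell--Davenport~\cite{BD} and moves on. Your proposal likewise terminates in a citation of~\cite{BD} for the crux (the case where $A$ and $A\cup\{j\}$ have the same qualification status), so in substance you and the paper agree: the result is quoted, not proved.

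That said, your setup goes further than the paper does. The reduction to integrality of single-element increments of $\rho$, the direct handling of the increment at position~$1$ via the scheme axioms, and the squeeze argument in the mixed case ($A\notin\Gamma$, $A\cup\{j\}\in\Gamma$, where combining $\rho(\overline{A})=\rho(A)+1$, $\rho(\overline{A\cup\{j\}})=\rho(A\cup\{j\})$, and $\rho(\overline{A\cup\{j\}})-\rho(\overline{A})\in[0,1]$ with the direct bound $\rho(A\cup\{j\})-\rho(A)\in[0,1]$ pins the increment to exactly~$1$) are all correct and constitute a clean partial argument. Your remark that submodularity of $X\mapsto\log_q|\mathcal{C}_X|$ can fail for arbitrary $\mathcal{C}\subset F^n$ is also correct and explains why the remaining cases require real work. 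The standard route through (as in~\cite{BD}) is indeed to pass to the Shannon entropy of the uniform distribution on $\mathcal{C}$, which \emph{is} submodular, and to show that the ideal-perfectness hypothesis together with connectedness forces entropy and log-cardinality to coincide on all subsets; you identify this correctly but do not carry it out, which matches the paper's own level of detail.
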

For more on this subject we refer to \cite{SA}, \cite{JM}, \cite{BD}, \cite{Ma}.

\section{Generalized Hamming weights}

\subsection{Definition via the associated matroid}

For a block code $C$, let $d(\bm{x},\bm{y})$ be the Hamming distance between the codewords $\bm{x}$ and $\bm{y}$, that is $d(\bm{x},\bm{y}) = |Supp(\bm{x},\bm{y})|.$ The minimal distance $d$ is defined as \[d= \min\{d(\bm{x},\bm{y}),\ \bm{x},\bm{y} \in C,\ \bm{x} \neq \bm{y}\}.\] 
Then from~\cite[Prop. 5]{SA}, the minimal distance of an almost affine code $C$ is equal to the minimum cardinality of the circuits of the dual of the matroid associated to $C$, in other words, \[d= d_1(M_C^*).\] This suggests the following definition of generalized Hamming weights for an almost affine code:

\begin{definition}
The generalized Hamming weights for an almost affine code $C$ of dimension $k$ are \[d_i(C)=d_i(M_C^*) = \min\{|X|,\ |X| - r^*(X) = i\}\]  for  $1\leqslant i \leqslant k$, where $r^*$ is the rank function of $M_C^*$.
\end{definition} 

\begin{example} Let $C'$ be the almost affine code of Example~\ref{running}. The dual of $M_{C'=}U_{2,3}$ is $M_{C'}^*=U_{1,3}$, the uniform matroid of rank $1$ on $3$ elements. Its generalized Hamming weights are \begin{align*}d_1(C')&= d_1(M_{C'}^*)=2 \\ d_2(C')&=d_2(M_{C'}^*)=3.\end{align*} 
\end{example}

\begin{proposition} \label{fix}
Let $C$ be an almost affine code of length $n$ and dimension $k$ on the alphabet $F$. Let $\bm{\tilde{c}} \in C$ be any codeword. Then for every $1 \leqslant i \leqslant k$, \begin{eqnarray*} d_i(C) &=& \min\{|X|,\ r(E\-X) = k-i\}\\ & =& n - \max \{ |X|,\ r(X) = k-i\} \\ &=& n- \max\{|X|,\ |C(X,\bm{\tilde{c}})|=|F|^i\}.\end{eqnarray*} The third equality is independent of the choice of $\bm{\tilde{c}}$.
\end{proposition}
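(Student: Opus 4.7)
The plan is to unfold the definition of $d_i(C)$ and successively rewrite the indexing set. By definition, $d_i(C)=d_i(M_C^*)=\min\{|X|: |X|-r^*(X)=i\}$. Recall the standard dual rank formula $r^*(X)=|X|+r(E\setminus X)-r(E)$, which for our matroid becomes $r^*(X)=|X|+r(E\setminus X)-k$. Substituting this into the nullity expression gives $|X|-r^*(X)=k-r(E\setminus X)$, so the condition $|X|-r^*(X)=i$ is equivalent to $r(E\setminus X)=k-i$. This immediately yields the first equality.

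For the second equality, I would perform the change of variables $Y=E\setminus X$. Since $|X|=n-|Y|$ and the condition $r(E\setminus X)=k-i$ becomes $r(Y)=k-i$, minimizing $|X|$ is the same as maximizing $|Y|$ subject to $r(Y)=k-i$. This gives $d_i(C)=n-\max\{|Y|: r(Y)=k-i\}$.

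For the third equality, I would invoke Proposition~\ref{prop2}, which asserts $|C(X,\bm{\tilde{c}})|=|F|^{k-r(X)}$ for any fixed $\bm{\tilde{c}}\in C$. Thus the condition $|C(X,\bm{\tilde{c}})|=|F|^i$ holds if and only if $r(X)=k-i$, which is exactly the condition appearing in the second equality. This identification is valid for any choice of $\bm{\tilde{c}}\in C$, so the set $\{X: |C(X,\bm{\tilde{c}})|=|F|^i\}$ coincides with $\{X: r(X)=k-i\}$ independently of $\bm{\tilde{c}}$, establishing both the last equality and the asserted independence.

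There is no real obstacle here: the proposition is essentially a sequence of translations between the matroid language (via the rank function of $M_C$ and its dual) and the code-theoretic language (via the fibres $C(X,\bm{\tilde{c}})$), and every step is a direct application of a result that has already been recorded. The only point to pay minor attention to is ensuring that the complementation is carried out on $E=\{1,\dots,n\}$ when passing between $X$ and $E\setminus X$, and that the bound $1\leqslant i\leqslant k$ is respected so that subsets with $r(X)=k-i\geqslant 0$ exist.
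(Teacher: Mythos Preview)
Your argument is correct and follows the same path as the paper: the first equality comes from the dual rank identity $r^*(X)=|X|+r(E\setminus X)-k$, the second from the complementation $Y=E\setminus X$, and the third from Proposition~\ref{prop2}. The paper's own proof is essentially a two-line sketch of exactly these steps (and in fact contains a sign typo in the dual rank formula that you have corrected), so your version is simply a more fully spelled-out rendering of the intended argument.
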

\begin{proof}
 The first equality follows simply from the fact that \[r^*(X) = |X| + r(E\-X) + k\] while the third equality is derived from Proposition~\ref{prop2}. 
\end{proof}

\subsection{Generalized Hamming weights and subcodes}

For linear codes, the generalized Hamming weights are originally defined as minimal supports of linear subcodes of a given dimension (\cite{W}). While for linear codes of dimension $k$ over the finite field $\Fq$, the number of linear subcodes of dimension $1 \leqslant i \leqslant k$ is known, namely $\genfrac{[}{]}{0pt}{}ki_q$, this is not the case for almost affine codes. Even two almost affine codes having the same associated matroid do not necessarily have the same number of almost affine subcodes. Nevertheless, we can express the generalized Hamming weights for an almost affine code in terms of supports of almost affine subcodes.

\begin{definition}
Let $C$ be an almost affine code, and let $\bm{\tilde{c}} \in C$. The $\bm{\tilde{c}}$-support of $C$ is \[Supp(C,\bm{\tilde{c}}) = \bigcup_{\bm{w} \in C} Supp(\bm{w},\bm{\tilde{c}}).\]
\end{definition}

\begin{lemma} Let $C$ be an almost affine code, and $\bm{\tilde{c}},\bm{\tilde{d}} \in C$. Then we have \[Supp(C,\bm{\tilde{c}}) = Supp(C,\bm{\tilde{d}}).\]
\end{lemma}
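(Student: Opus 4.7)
The plan is to show that the set $Supp(C,\bm{\tilde{c}})$ is actually intrinsic to the matroid $M_C$, and in particular does not depend on the reference codeword $\bm{\tilde{c}}$. Concretely, I would prove that
\[
Supp(C,\bm{\tilde{c}}) = \{i \in E \mid r(\{i\}) \geqslant 1\},
\]
where $r$ is the rank function of $M_C$. Since the right-hand side is independent of $\bm{\tilde{c}}$, applying this characterization to both $\bm{\tilde{c}}$ and $\bm{\tilde{d}}$ gives the desired equality.

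For the inclusion from left to right, I would argue: if $i \in Supp(C,\bm{\tilde{c}})$, then by definition of the $\bm{\tilde{c}}$-support of $C$ there exists some $\bm{w} \in C$ with $\bm{w}_i \neq \bm{\tilde{c}}_i$, so the projection $C_{\{i\}}$ contains at least the two distinct symbols $\bm{w}_i$ and $\bm{\tilde{c}}_i$. Thus $|C_{\{i\}}| \geqslant 2$, and since $|C_{\{i\}}| = |F|^{r(\{i\})}$ with $|F| \geqslant 2$, we must have $r(\{i\}) \geqslant 1$.

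For the reverse inclusion, suppose $r(\{i\}) \geqslant 1$. Then $|C_{\{i\}}| = |F|^{r(\{i\})} \geqslant |F| \geqslant 2$, so there exists a codeword $\bm{w} \in C$ whose $i$-th coordinate differs from $\bm{\tilde{c}}_i$. Hence $i \in Supp(\bm{w},\bm{\tilde{c}}) \subset Supp(C,\bm{\tilde{c}})$.

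I do not expect any real obstacle here; the entire content of the lemma is the observation that $Supp(C,\bm{\tilde{c}})$ only records which coordinates are non-loops of the associated matroid $M_C$, which is a property that the definition of an almost affine code makes immediately available via $|C_{\{i\}}| = |F|^{r(\{i\})}$. The only point requiring any care is noting that $|F| \geqslant 2$ (a block code on a one-letter alphabet is degenerate) so that $r(\{i\}) \geqslant 1$ actually forces the existence of a codeword disagreeing with $\bm{\tilde{c}}$ at position $i$.
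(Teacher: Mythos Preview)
Your proof is correct, and the characterization $Supp(C,\bm{\tilde{c}}) = \{i \in E \mid r(\{i\}) \geqslant 1\}$ (the non-loops of $M_C$) is a nice bonus. However, the paper takes a more elementary route: given $i \in Supp(C,\bm{\tilde{c}})$ with witness $\bm{w}_i \neq \bm{\tilde{c}}_i$, either $\bm{w}_i \neq \bm{\tilde{d}}_i$ and $\bm{w}$ is already a witness for $\bm{\tilde{d}}$, or $\bm{w}_i = \bm{\tilde{d}}_i$, in which case $\bm{\tilde{c}}_i \neq \bm{\tilde{d}}_i$ and $\bm{\tilde{c}}$ itself serves as a witness (since $\bm{\tilde{c}} \in C$). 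Symmetry finishes it. The paper's argument never invokes the almost affine property or the matroid $M_C$; it is valid for any block code whatsoever, provided only that both reference words lie in $C$. Your approach, by contrast, genuinely uses $|C_{\{i\}}| = |F|^{r(\{i\})}$, so it is specific to almost affine codes, but it has the advantage of identifying the support explicitly as a matroid invariant.
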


\begin{proof} Namely, let $i \in \bigcup_{\bm{w} \in C} Supp(\bm{w},\bm{\tilde{c}})$. Then there exists $\bm{w} \in C$ such that $\bm{w}_i \neq \bm{\tilde{c}}_i$. If $\bm{w}_i \neq \bm{\tilde{d}}_i$, then of course $i \in \bigcup_{\bm{w} \in C} Supp(\bm{w},\bm{\tilde{d}})$. Otherwise $\bm{\tilde{c}}_i \neq \bm{w}_i = \bm{\tilde{d}}_i$ and again, $i \in \bigcup_{\bm{w} \in C} Supp(\bm{w},\bm{\tilde{d}})$. By symmetry, we get equality.
\end{proof}

The support of any almost affine subcode is thus well defined, as long as we take the $\bm{\tilde{c}}$-support of any codeword $\bm{\tilde{c}}$ in the subcode, and we may omit the reference to this codeword. For linear codes, we have an obvious candidate that is in any subcode, namely the $\bm{0}$-codeword. For almost affine codes, we may have to use different codewords for different subcodes. Indeed, in the almost affine code $C'$ of Example~\ref{running}, the following almost affine subcodes of dimension $1$ are disjoint: \[\{0,0,0\},\{1,0,1\},\{2,0,2\},\{3,0,3\}\] and \[\{1,1,2\},\{2,1,3\},\{0,1,1\},\{3,1,0\}.\] In that case, their supports are $(1,3)$ for both.

\begin{theorem}\label{th1}
Let $C$ be an almost affine code of length $n$ and dimension $k$ on an alphabet $F$ of cardinality $q$. Then the generalized Hamming weights for $C$ are \[d_i(C) = \min \left\{\begin{array}{c}|Supp(D)|,\ D\textrm{ is an almost affine} \\  \textrm{subcode of dimension }i\textrm{ of }C\end{array}\right\}\] for $1 \leqslant i \leqslant k$.
\end{theorem}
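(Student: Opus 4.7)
The plan is to set up a bijection-like correspondence between almost affine subcodes of dimension $i$ and subsets $X \subseteq E$ with $|C(X,\tilde{c})| = |F|^i$, and then invoke the characterization of $d_i(C)$ given in Proposition~\ref{fix}. Proposition~\ref{prop2} tells us that $C(X,\tilde{c})$ is itself an almost affine subcode of $C$ of dimension $k - r(X)$, so whenever $r(X) = k - i$, the set $C(X,\tilde{c})$ is an almost affine subcode of dimension~$i$. This will provide the ``$\geq$'' direction.

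For the inequality $d_i(C) \geq \min |Supp(D)|$, I would fix a maximal $X \subseteq E$ with $r(X) = k-i$, so that $n - |X| = d_i(C)$ by Proposition~\ref{fix}. Choosing any $\tilde{c} \in C$ and setting $D = C(X,\tilde{c})$, I get an almost affine subcode of dimension $i$ whose support is contained in $E \setminus X$ (every word of $D$ agrees with $\tilde{c}$ on $X$). Hence $|Supp(D)| \leq n - |X| = d_i(C)$.

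For the reverse inequality, I would take an arbitrary almost affine subcode $D \subseteq C$ of dimension $i$, pick any $\tilde{d} \in D$, and set $Y = Supp(D) = Supp(D,\tilde{d})$ and $X = E \setminus Y$. Since every codeword in $D$ agrees with $\tilde{d}$ on $X$, we have $D \subseteq C(X,\tilde{d})$, and Proposition~\ref{prop2} gives $|F|^i = |D| \leq |C(X,\tilde{d})| = |F|^{k - r(X)}$, so $r(X) \leq k - i$. If $r(X) = k - i$, then $|Y| \geq n - \max\{|X'| : r(X') = k-i\} = d_i(C)$ by Proposition~\ref{fix}. If instead $r(X) = k - j$ for some $j > i$, the same argument yields $|Y| \geq d_j(C) > d_i(C)$ by strict monotonicity of the weights of the matroid; either way $|Supp(D)| \geq d_i(C)$.

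The main subtlety to be careful about is the second direction, because given an abstract subcode $D$ we have no a priori control on the rank of $X = E \setminus Supp(D)$ beyond the inequality $r(X) \leq k - i$. The key observation that rescues this is the strict monotonicity of the generalized Hamming weights of the matroid $M_C^*$ (equivalently $d_j(C) > d_i(C)$ for $j > i$), which was noted in the excerpt immediately after the definition of $d_i(M)$. Once this is in hand, the proof becomes a straightforward combination of Propositions~\ref{prop2} and~\ref{fix}.
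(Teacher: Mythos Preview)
Your proof is correct and follows essentially the same approach as the paper: both directions use the subcode $C(X,\bm{\tilde{c}})$ together with Proposition~\ref{prop2}, and the potentially problematic case where $r(E\setminus Supp(D))<k-i$ is handled via the monotonicity of the $d_j$. The only difference is notational (you write $X$ for the complement of the support, the paper writes $X$ for the support itself), and you invoke Proposition~\ref{fix} explicitly where the paper works directly with the definition $d_l=\min\{|Y|:\ k-r(E\setminus Y)=l\}$.
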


\begin{remark} Almost affine subcodes of dimension $i$ always exist by Proposition~\ref{prop2}, since we can always find in the matroid $M_C$ a set $X$ with $r(X)=k-i$.
\end{remark}

\begin{proof}[Proof of Theorem~\ref{th1}]
For $1 \leqslant i \leqslant k$, let \[d_i=d_i(C)\]  and \[e_i = \min\left\{\begin{array}{c}|Supp(D)|,\ D\textrm{ is an almost affine subcode}\\\textrm{ of dimension }i\textrm{ of }C\end{array}\right\}.\] We show first that $d_i \leqslant e_i$. Let $D$ be an almost affine subcode of $C$ of dimension $i$ such that $|Supp(D)|=e_i$. By definition of the dimension, $|D| = q^i$. Let $\bm{\tilde{d}} \in D \subset C$, and let $X= Supp(D,\bm{\tilde{d}})$. We look at $D' = C(E\-X,\bm{\tilde{d}})$. By Proposition~\ref{prop2}, we know that this is an almost affine subcode of dimension $l= k- r(E\-X).$ It is obvious that $D \subset D'$, and in particular \[i \leqslant l = k- r(E\-X).\] By the monotone property of generalized Hamming weights for matroids, we have that \[d_i \leqslant d_l = \min \{|Y|, k-r(E\-Y)=l\} \leqslant |X| = e_i.\]

We show now that $e_i \leqslant d_i$. Let $X \subset E$ be such that $|X| = d_i$ and $r(E\-X) = k-i$. Consider $D''=C(E\-X,\bm{\tilde{c}})$ where $\bm{\tilde{c}}$ is any codeword of $C$. By Proposition~\ref{prop2}, the dimension of $D''$ is $i$. Of course $\bm{\tilde{c}} \in D''$, and by construction $Supp(D'',\bm{\tilde{c}}) \subset X$. Then 
\begin{eqnarray*}d_i = |X| &\geqslant& |Supp(D'',\bm{\tilde{c}})|\\& \geqslant& \min \{|Supp(D), \dim\ D=i\} = e_i.\end{eqnarray*}
\end{proof}

\begin{example} Let $C'$ be the almost affine code of Example~\ref{running}. This code has $12$ almost affine subcodes of dimension $1$, and it can be shown that all of them have support of cardinality $2$. One of these subcodes is $\{022,332,202,112\}$ which has support $\{1,2\}$.
\end{example}

\subsection{Generalized Hamming weights and codewords}

In~\cite{JV}, it is shown that the nullity function (and a posteriori the generalized Hamming weights) can be expressed as the support of non-redundant circuits. 

\begin{definition} 
Let $\{X_1,\cdots,X_s\}$ be a set of distinct subsets of a given set.
We say that this is a non-redundant set of subsets if the union of the $s$ subsets is not equal to any union of $s-1$ of the subsets.
\end{definition}

By abuse of notation we then also just say that $X_1,\cdots,X_s$ are non-redundant subsets.

From~\cite{JV} we have:
\begin{proposition} \label{supp}
Let $M$ be a matroid and $X$ a subset of the ground set. Then the nullity of $X$ is equal to the number of elements in a maximal non-redundant subset of circuits included in $X$.
\end{proposition}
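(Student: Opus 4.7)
The plan is to prove both inequalities: that one can always exhibit $n(X)$ pairwise non-redundant circuits inside $X$, and that no non-redundant collection of circuits contained in $X$ can exceed $n(X)$ in size.

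For the lower bound, I would extract a basis $B$ of the restriction $M|_X$, so $|B|=\rho(X)$ and $X\setminus B$ has exactly $n(X)$ elements $e_1,\dots,e_{n(X)}$. Using Definition~\ref{fundamental}, to each $e_j$ I associate the fundamental circuit $\sigma(B,e_j)\subset B\cup\{e_j\}\subset X$. These are pairwise distinct, and they form a non-redundant family: the element $e_j$ belongs to $\sigma(B,e_j)$ but to none of the other fundamental circuits (each $\sigma(B,e_i)$ for $i\neq j$ is contained in $B\cup\{e_i\}$, which avoids $e_j$). Hence removing $\sigma(B,e_j)$ from the union loses $e_j$, proving non-redundancy. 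This yields $n(X)$ non-redundant circuits contained in $X$.

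For the upper bound, suppose $C_1,\dots,C_s$ are non-redundant circuits contained in $X$. By non-redundancy, for each $j$ there exists a ``private'' element $e_j\in C_j\setminus\bigcup_{i\neq j}C_i$. Set $Y=\bigcup_{i=1}^s C_i$ and consider the descending chain $Y=Y_0\supset Y_1\supset\cdots\supset Y_s$ obtained by removing the $e_j$'s one at a time, so $Y_k=Y\setminus\{e_1,\dots,e_k\}$. The key observation is that $C_k\setminus\{e_k\}$ lies entirely inside $Y_k$: indeed, the privacy of $e_1,\dots,e_{k-1}$ prevents any of them from sitting in $C_k$. Since $C_k$ is a circuit, $C_k\setminus\{e_k\}$ is independent of the same rank as $C_k$, so $e_k$ lies in the closure of $C_k\setminus\{e_k\}$ and therefore in the closure of $Y_k$. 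Hence $\rho(Y_{k-1})=\rho(Y_k)$ for each $k$, which telescopes to $\rho(Y)=\rho(Y_s)\leq|Y_s|=|Y|-s$, giving $n(Y)\geq s$. Since nullity is monotone under inclusion (axiom (R2) implies that adding one element raises nullity by $0$ or $1$), we conclude $s\leq n(Y)\leq n(X)$, as desired.

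The substantive step is the upper bound, and specifically the invariance of rank $\rho(Y_{k-1})=\rho(Y_k)$. Getting this right hinges on the order in which the private elements $e_j$ are removed being irrelevant for the containment $C_k\setminus\{e_k\}\subset Y_k$; that containment is precisely what the privacy condition delivers, so one must be careful to use non-redundancy in the sharpened ``private element'' form rather than merely as a statement about unions. Once this is in place, the rest of the argument is formal, using only the basic rank/closure properties encoded in axioms (R1)--(R3) and the minimality of circuits.
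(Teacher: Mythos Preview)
Your argument is correct. The paper does not actually prove this proposition; it merely quotes the result from \cite{JV}. Your write-up therefore supplies a self-contained proof where the paper only gives a citation.

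Both halves of your argument are standard and sound. For the lower bound, the fundamental circuits $\sigma(B,e_j)$ attached to the elements of $X\setminus B$ (with $B$ a basis of $M|_X$) are indeed pairwise non-redundant for exactly the reason you state: $e_j$ lies in $\sigma(B,e_j)$ and in no other $\sigma(B,e_i)$. For the upper bound, the peeling argument via private elements is fine: the containment $C_k\setminus\{e_k\}\subset Y_k$ holds because the previously removed $e_1,\dots,e_{k-1}$ are each private to some $C_i$ with $i<k$ and hence absent from $C_k$; the circuit property then forces $\rho(Y_{k-1})=\rho(Y_k)$, and the telescoping and monotonicity of nullity finish it. One small remark: the proposition speaks of a \emph{maximal} non-redundant set, which in principle could mean maximal under inclusion rather than of maximum cardinality. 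Your two inequalities together show that every non-redundant family has size at most $n(X)$ and that size $n(X)$ is attained, so in particular every maximum-size family has $n(X)$ members; this is precisely what the paper needs in its subsequent applications (where only the bound $n(Y)\geqslant s$ from $s$ non-redundant circuits is invoked).
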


For linear codes, circuits of the  matroid associated to (any) parity check matrix are in one to one correspondence with supports of minimal codewords. In~\cite[Proposition 5]{SA}, it is proved that an analogous result holds for almost affine codes, namely that if $C$ is an almost affine code and $\bm{\tilde{c}} \in C$, then the $\bm{\tilde{c}}$-supports of the $\bm{\tilde{c}}$-minimal codewords are the circuits of the dual matroid associated to the code. They are of course independent of the codeword $\bm{\tilde{c}}$. 
This gives rise to the following:
\begin{definition}
Let $\bm{\tilde{c}}$ be a codeword in an almost affine code $C$. A set $\{\bm{c_1},\cdots,\bm{c_i}\} \subset C$ is called a $\bm{\tilde{c}}$-non-redundant set of codewords if  $\{Supp(\bm{c_1},\bm{\tilde{c}}),\cdots,Supp(\bm{c_i},\bm{\tilde{c}})\}$ is a non-redundant set of subsets. It is called a $\bm{\tilde{c}}$-minimal non-redundant set of codewords if in addition the $\bm{c_j}$ are $\bm{\tilde{c}}$-minimal for all $j$.
\end{definition}

By abuse of notation we also just say that $\bm{c_1},\cdots,\bm{c_i}$
are $\bm{\tilde{c}}$-non-redundant codewords (respectively $\bm{\tilde{c}}$-minimal non-redundant codewords), and we may omit the reference to $\bm{\tilde{c}}$ when there is no risk of confusion.

Proposition~\ref{supp} gives rise to the following characterization of the generalized Hamming weights for a matroid.

\begin{proposition}Let $M$ be a matroid of rank $k$ on the ground set $E$. Then the $i$-th generalized Hamming weight, for $1 \leqslant i \leqslant |E|-k$ is given by \[d_i(M) = \min\left\{|\bigcup_{j=1}^i X_j|,\begin{array}{c}\ X_1,\cdots,X_i \\ \textrm{ are non-redundant circuits}\end{array}\right\}.\]
\end{proposition}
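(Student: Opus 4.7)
The plan is to prove the two inequalities $d_i(M)\leqslant \delta_i$ and $\delta_i\leqslant d_i(M)$, where $\delta_i$ denotes the right-hand side of the claim. Both directions will reduce to Proposition~\ref{supp}, which already links nullity to maximal non-redundant collections of circuits, combined with the monotonicity of the sequence $d_i(M)$ noted earlier in the paper.

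For $d_i(M)\leqslant \delta_i$, I would start with non-redundant circuits $X_1,\dots,X_i$ achieving the minimum $\delta_i$ and set $Y=\bigcup_{j=1}^i X_j$. Each $X_j$ is a circuit of $M$ contained in $Y$, and by assumption the collection $\{X_1,\dots,X_i\}$ is non-redundant, so it is a non-redundant set of circuits inside $Y$. By Proposition~\ref{supp} the nullity $n(Y)$ equals the size of a \emph{maximal} such collection in $Y$, hence $n(Y)\geqslant i$. Writing $m=n(Y)$, we therefore have $|Y|\geqslant d_m(M)$ by definition of $d_m(M)$, and $d_m(M)\geqslant d_i(M)$ because the matroidal generalized Hamming weights are strictly increasing in $i$. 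This yields $\delta_i=|Y|\geqslant d_i(M)$.

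For the reverse inequality, I would pick $X\subseteq E$ realizing $d_i(M)$, so $|X|=d_i(M)$ and $n(X)=i$. Proposition~\ref{supp} guarantees the existence of a maximal non-redundant set of circuits $X_1,\dots,X_i\subseteq X$ of cardinality exactly $n(X)=i$. Their union $Y=\bigcup_{j=1}^iX_j$ is a subset of $X$, and $\{X_1,\dots,X_i\}$ is a non-redundant set of circuits by construction, so $\delta_i\leqslant |Y|\leqslant |X|=d_i(M)$.

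The argument is essentially routine once Proposition~\ref{supp} is available; the only point that deserves care is that, in the first direction, the collection $\{X_1,\dots,X_i\}$ need not be maximal inside $Y$, so $n(Y)$ may exceed $i$. Handling this is what forces me to invoke the strict monotonicity of the weights $d_m(M)$, which is the single ingredient beyond Proposition~\ref{supp} required to close the proof.
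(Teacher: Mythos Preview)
Your proof is correct and follows essentially the same route as the paper: both directions rely on Proposition~\ref{supp} in exactly the way you describe, and the paper likewise handles the possibility $n(Y)>i$ via the monotonicity of the $d_i(M)$. The only cosmetic difference is notation.
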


\begin{proof} Let \[d_i=\min\{|X|,\ n(X)=i\}\] and \[e_i=\min\{|\bigcup_{j=1}^i X_j|,\ X_1,\cdots,X_i \textrm{ are non-redundant circuits}\}\]
Let $X_1\cdots,X_i$ non-redundant circuits such that $|\bigcup X_j|=e_i$, and let $Y=\bigcup X_j$. Then by Proposition~\ref{supp}, $j=n(Y) \geqslant i$. By the monotony of the generalized Hamming weights for a matroid, \[d_i \leqslant d_j \leqslant |Y| = e_i\] and one inequality is proved. For the second inequality, let $Y \subset E$ such that $|Y|=d_i$ and $n(Y)=i$. Then by Proposition~\ref{supp} again, there exists $i$ non-redundant circuits $Y_1,\cdots,Y_i$ such that $\bigcup Y_j \subset Y.$ Then \[e_i \leqslant |\bigcup Y_j| \leqslant |Y| = d_i\] and this proves the proposition.
\end{proof}

Then we have the following characterization of the generalized Hamming weights for an almost affine code (and thus linear code):
\begin{proposition}\label{proposition6} Let $C$ be an almost affine code of dimension $k$. Then the generalized Hamming weights for $C$ are given by \[d_i(C)=\min \left\{\begin{array}{c}|\bigcup_{j=1}^i Supp(\bm{c_j},\bm{\tilde{c}})|,\ (\bm{c_1},\cdots,\bm{c_i})\textrm{ are }\\ \bm{\tilde{c}}- \textrm{minimal non-redundant codewords}\end{array}\right\}\]
\end{proposition}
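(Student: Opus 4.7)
The proof should be essentially a translation exercise chaining together three facts already available in the paper, once one identifies the right dictionary between matroid-language and code-language.

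First I would fix a codeword $\bm{\tilde{c}} \in C$ and recall the basic identification. By definition of the generalized Hamming weights for an almost affine code, $d_i(C) = d_i(M_C^*)$. By the proposition immediately preceding the statement (applied to the matroid $M_C^*$ of rank $n-k$), we have
\[d_i(M_C^*) = \min\left\{\Bigl|\bigcup_{j=1}^i X_j\Bigr| : X_1,\dots,X_i \text{ are non-redundant circuits of } M_C^*\right\}.\]
So the proof reduces to showing that this minimum, taken over tuples of non-redundant circuits of $M_C^*$, coincides with the minimum over $\bm{\tilde{c}}$-minimal non-redundant codewords of $C$.

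The bridge is the result cited from \cite[Proposition 5]{SA}: for the fixed $\bm{\tilde{c}}$, the circuits of $M_C^*$ are precisely the sets of the form $Supp(\bm{c},\bm{\tilde{c}})$ where $\bm{c}$ runs over the $\bm{\tilde{c}}$-minimal codewords. Using this, I would prove the two inequalities separately.

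For the inequality $d_i(C) \leqslant $ the code-side minimum: given $\bm{\tilde{c}}$-minimal non-redundant codewords $\bm{c_1},\dots,\bm{c_i}$ achieving the code-side minimum, their $\bm{\tilde{c}}$-supports are, by the above cited fact, non-redundant circuits of $M_C^*$ with the same union, so $d_i(M_C^*)$ is no larger. For the opposite inequality, start with non-redundant circuits $X_1,\dots,X_i$ of $M_C^*$ achieving $d_i(M_C^*)$; the same cited fact yields, for each $j$, a $\bm{\tilde{c}}$-minimal codeword $\bm{c_j}$ with $Supp(\bm{c_j},\bm{\tilde{c}}) = X_j$. By construction the supports of $\bm{c_1},\dots,\bm{c_i}$ are non-redundant, hence these codewords form a $\bm{\tilde{c}}$-minimal non-redundant set whose union of supports has the same cardinality $d_i(C)$.

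The only mild subtlety worth noting explicitly is that the definition of $\bm{\tilde{c}}$-non-redundant codewords is expressed directly in terms of the non-redundancy of their $\bm{\tilde{c}}$-supports, so the translation between the two sides of the equality is literal and no further combinatorial argument is needed. Because both sides turn out to be equal to $d_i(M_C^*)$, the right-hand side is independent of the choice of $\bm{\tilde{c}}$, which is consistent with the earlier observation that supports and minimality are invariants of the matroid $M_C^*$ and not of the chosen reference codeword.
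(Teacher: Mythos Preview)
Your proposal is correct and matches exactly the route the paper takes: the paper states Proposition~\ref{proposition6} as an immediate consequence of the preceding proposition on non-redundant circuits applied to $M_C^*$, combined with \cite[Proposition~5]{SA} identifying the circuits of $M_C^*$ with the $\bm{\tilde{c}}$-supports of $\bm{\tilde{c}}$-minimal codewords, and does not even write out a separate proof. Your two-inequality translation is precisely the argument the paper leaves implicit.
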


For a linear code, we have that a linear subcode of dimension $i$ and minimal support gives $i$ codewords with non-redundant supports that define $d_i$, and the converse. And actually, that any $i$ non-redundant codewords defines a linear subcode of dimension $i$. This is not the case for almost affine codes. There is for example no almost affine subcodes of dimension $1$ in the code $C'$ of Example~\ref{running} containing the origin (in this case $000$) and the word $112$.

\begin{lemma} Let $D \subset C$ be an almost affine subcode of dimension $i$ and such that $|Supp(D)|=d_i(C)$. Let $\bm{\tilde{c}} \in D$. Then we can find $\bm{c_1},\cdots,\bm{c_i}\in D$, $\bm{\tilde{c}}$ non-redundant and such that \[|\bigcup_{j=1}^i Supp(\bm{c_j},\bm{\tilde{c}})| = |Supp(D)| = d_i(C).\]
\end{lemma}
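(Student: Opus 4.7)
The plan is to identify the subcode $D$ with a fixed-value subcode of the form $C(E\setminus X,\bm{\tilde{c}})$, and then lift non-redundant circuits of $M_C^*$ inside $X$ to codewords by means of the correspondence from \cite[Prop.~5]{SA}.

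Setting $X:=Supp(D)$, every $\bm{c}\in D$ agrees with $\bm{\tilde{c}}$ off $X$, so $D\subseteq C(E\setminus X,\bm{\tilde{c}})$. By Proposition~\ref{prop2} the latter has cardinality $|F|^{k-r(E\setminus X)}$, whereas $|D|=|F|^i$, so $k-r(E\setminus X)\geqslant i$, i.e.\ $n^*(X)=|X|-r^*(X)=k-r(E\setminus X)\geqslant i$. If the inequality were strict, say $n^*(X)=t>i$, then $|X|\geqslant d_t(C)>d_i(C)=|X|$ by strict monotonicity of the generalized Hamming weights, a contradiction. Hence $n^*(X)=i$, $r(E\setminus X)=k-i$, and $D=C(E\setminus X,\bm{\tilde{c}})$.

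I then apply Proposition~\ref{supp} to $M_C^*$ and $X$: since $n^*(X)=i$, there exist non-redundant circuits $Y_1,\dots,Y_i$ of $M_C^*$ contained in $X$. By \cite[Prop.~5]{SA}, each $Y_j$ is the $\bm{\tilde{c}}$-support of some $\bm{\tilde{c}}$-minimal codeword $\bm{c_j}\in C$; since $Supp(\bm{c_j},\bm{\tilde{c}})=Y_j\subseteq X$, this codeword already lies in $C(E\setminus X,\bm{\tilde{c}})=D$. By construction $\bm{c_1},\dots,\bm{c_i}$ form a $\bm{\tilde{c}}$-minimal non-redundant family inside $D$. To finish, $|\bigcup_{j=1}^i Y_j|\leqslant|X|=d_i(C)$, while $\bigcup_{j=1}^i Y_j$ contains $Y_1,\dots,Y_i$ as non-redundant circuits of $M_C^*$, so by Proposition~\ref{supp} its nullity in $M_C^*$ is some $t\geqslant i$, giving $|\bigcup_{j=1}^i Y_j|\geqslant d_t(C)\geqslant d_i(C)$ and hence equality.

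The main subtlety I anticipate is the identification $D=C(E\setminus X,\bm{\tilde{c}})$: while in the linear case equality would follow immediately from a dimension count, in the almost-affine setting one must additionally invoke the strict monotonicity of the weights to rule out $n^*(X)>i$. Once $D$ is realized as a fixed-value subcode, the remainder of the argument is a routine translation between the code and the dual matroid via \cite[Prop.~5]{SA}.
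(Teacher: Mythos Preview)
Your proof is correct but proceeds quite differently from the paper's. The paper never identifies $D$ with a fixed-value subcode; instead it takes a basis $X$ of the matroid $M_D$ of the subcode, uses Corollary~\ref{surbase} to build, for each $x\in X$, a codeword $\bm{c_x}\in D$ whose restriction to $X$ is the ``$x$-th unit vector'', then refines each $\bm{c_x}$ to a $\bm{\tilde{c}}$-minimal $\bm{d_x}$ with smaller support, checks that $x$ still lies in $Supp(\bm{d_x},\bm{\tilde{c}})$ (hence the $\bm{d_x}$ are non-redundant), and invokes Proposition~\ref{proposition6} for the lower bound $d_i(C)$. Your route instead proves the structural fact $D=C(E\setminus Supp(D),\bm{\tilde{c}})$ via strict monotonicity of the weights, and then pulls the codewords back from non-redundant circuits of $M_C^*$ inside $Supp(D)$ using Proposition~\ref{supp} and \cite[Prop.~5]{SA}. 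Your argument has the bonus that the resulting $\bm{c_j}$ are automatically $\bm{\tilde{c}}$-minimal and that the identification $D=C(E\setminus Supp(D),\bm{\tilde{c}})$ is made explicit (this is exactly the ingredient the paper uses in the \emph{converse} lemma immediately afterwards); the paper's argument, on the other hand, stays entirely inside $D$ and its own matroid $M_D$, avoiding any appeal to the ambient dual matroid until the very last step.
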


\begin{proof} Without loss of generality, we may assume that $F = \{0,\cdots,|F|-1\}$ and that $\bm{\tilde{c}}$ is the $0$ word. Let $X$ be a basis of $M_D$. In particular, by Corollary~\ref{surbase}, there exists for each $x \in X$ a (unique) word $\bm{c_x} \in D$ such that $(\bm{c_x})_{X\-\{x\}} = (0,\cdots,0)$ and $(\bm{c_x})_x=1$. Let $\bm{d_x}\in D$ be a word such that $Supp(\bm{d_x},\bm{\tilde{c}})$ is minimal and contained in $Supp(\bm{c_x},\bm{\tilde{c}})$. We claim that $x \in Supp(\bm{d_x},\bm{\tilde{c}})$. Namely, if not,  then \[Supp(\bm{{d}}_x,\bm{\tilde{c}}) \subset Supp(\bm{c_x},\bm{\tilde{c}}) \subset E \- (X\-\{x\})\] together with $x \not \in Supp(\bm{d_x},\bm{\tilde{c}})$ would imply that $(\bm{d_x})_X=(0,\cdots,0)$, that is, $\bm{d_x}=\bm{\tilde{c}}$ by Corollary~\ref{surbase} again, which is absurd. Thus, these codewords $\bm{d_x}$ are $\bm{\tilde{c}}$-minimal non-redundant. Then by Proposition~\ref{proposition6}, we have that \[|\bigcup_{x \in X} Supp(\bm{c_x},\bm{\tilde{c}})| \geqslant|\bigcup_{x \in X} Supp(\bm{d_x},\bm{\tilde{c}})| \geqslant d_i(C).\] 
By construction, since all the $\bm{c_x} \in D$, \[\bigcup_{x \in X} Supp(\bm{c_x},\bm{\tilde{c}}) \subset Supp(D)\] so that \[d_i(C) \leqslant |\bigcup_{x \in X} Supp(\bm{c_x},\bm{\tilde{c}})| \leqslant |Supp(D)| = d_i(C)\] and there must be equality everywhere.
\end{proof}

And the converse:

\begin{lemma} Let $C$ be an almost affine code and $\bm{\tilde{c}} \in C$. Assume that $\bm{c_1},\cdots,\bm{c_i}$ are $\bm{\tilde{c}}$-minimal non-redundant and such that $|\bigcup Supp(\bm{c_j},\bm{\tilde{c}})| = d_i(C)$. Then there exists an almost affine subcode $D$ of $C$ containing $\bm{\tilde{c}},\bm{c_1},\cdots,\bm{c_i}$, of dimension $i$, and $|Supp(D)|=d_i(C)$. 
\end{lemma}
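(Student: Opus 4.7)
The plan is to construct $D$ explicitly as $C(E \setminus Y, \bm{\tilde{c}})$, where $Y$ is the union of the supports. The key is to show that this subcode has the right dimension, which boils down to a rank computation in $M_C$ (equivalently, a nullity computation in $M_C^*$).

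First, let $Y = \bigcup_{j=1}^i Supp(\bm{c_j}, \bm{\tilde{c}})$, so $|Y| = d_i(C)$ by hypothesis. Since the $\bm{c_j}$ are $\bm{\tilde{c}}$-minimal, their supports are circuits of $M_C^*$ (by~\cite[Proposition 5]{SA}), and by assumption they are non-redundant circuits contained in $Y$. Proposition~\ref{supp}, applied to $M_C^*$, then gives $n^*(Y) \geqslant i$. On the other hand, if $n^*(Y) \geqslant i+1$, then $d_{i+1}(C) \leqslant |Y| = d_i(C)$, contradicting the strict monotonicity of the generalized Hamming weights for a matroid. Hence $n^*(Y) = i$ exactly, and translating through $r^*(Y) = |Y| + r(E \setminus Y) - k$ yields $r(E \setminus Y) = k - i$.

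Next, set $D = C(E \setminus Y, \bm{\tilde{c}})$. By Proposition~\ref{prop2}, $D$ is an almost affine subcode of $C$ of dimension $k - r(E \setminus Y) = i$. Clearly $\bm{\tilde{c}} \in D$. Each $\bm{c_j}$ also lies in $D$: since $Supp(\bm{c_j}, \bm{\tilde{c}}) \subset Y$, we have $(\bm{c_j})_{E \setminus Y} = \bm{\tilde{c}}_{E \setminus Y}$, which is exactly the condition defining $C(E \setminus Y, \bm{\tilde{c}})$.

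Finally, I check the support equality. Every $\bm{d} \in D$ agrees with $\bm{\tilde{c}}$ on $E \setminus Y$, so $Supp(\bm{d}, \bm{\tilde{c}}) \subset Y$, and therefore $Supp(D) \subset Y$, giving $|Supp(D)| \leqslant |Y| = d_i(C)$. The reverse inequality follows at once from Theorem~\ref{th1}, since $D$ is an almost affine subcode of dimension $i$. Hence $|Supp(D)| = d_i(C)$, as required. The only non-routine step is the nullity computation in the first paragraph; the rest is an assembly of Proposition~\ref{prop2}, Proposition~\ref{fix}, and Theorem~\ref{th1}.
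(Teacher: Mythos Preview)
Your proof is correct and follows essentially the same route as the paper's: define $D=C(E\setminus Y,\bm{\tilde{c}})$ for $Y=\bigcup_j Supp(\bm{c_j},\bm{\tilde{c}})$, establish $n^*(Y)=i$ via Proposition~\ref{supp} and strict monotonicity of the $d_l$, then read off the dimension from Proposition~\ref{prop2} and the support from Theorem~\ref{th1}. Your write-up is in fact slightly more explicit than the paper's in justifying $\bm{c_j}\in D$ and in spelling out the translation $n^*(Y)=i\Leftrightarrow r(E\setminus Y)=k-i$; the reference to Proposition~\ref{fix} in your closing sentence is unnecessary, as you never actually invoke it.
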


\begin{proof} 
Let $X=\bigcup_{j =1}^i Supp (\bm{c_j},\bm{\tilde{c}})$.  We have that \[|X|=d_i(C) < d_l(C) = \min\{|Y|,\ n^*(Y)=l\}\] for every $i<l$, so that $n^*(X) \leqslant i$. The inequality $n^*(X) \geqslant i$ is a direct consequence of~\cite[Proposition 5]{SA} and Proposition~\ref{supp}. This shows that  $n^*(X) = i$, i.e $i=k-r(E\-X).$ This also means that the subcode $D=C(E\-X,\bm{\tilde{c}})$ is an almost affine subcode of dimension $i$ by Proposition~\ref{prop2}. By construction, $\bm{c_i} \in D$ for all $i$, and of course $\bm{\tilde{c}} \in D$. Moreover, generally, $Supp(C(E\-X,\bm{\tilde{c}})) \subset X$, so that $|Supp(D)| \leqslant |X| = d_i(C)$.  By Theorem~\ref{th1}, there has to be equality.\\
\end{proof}

\section{Duality and Wei duality} 

For linear codes, we can easily define a dual code, namely the orthogonal complement of the code. The generalized Hamming weights for the code and its dual are related by Wei duality (\cite[Theorem 3]{W}). This was generalized to matroids (coming from linear codes or not), as presented in Proposition~\ref{weis}. So, if $C$ is an almost affine code, we could define the dual generalized Hamming weights as the generalized Hamming weights for the dual of the associated matroid, and we would get a Wei duality by Proposition~\ref{weis}, coming essentially from matroid theory. It would be nice if these weights would come from a dual almost affine code. Unfortunately, we will see that such duals do not exist in general. But for a large class of almost affine codes, we can nevertheless define a dual code.

\subsection{The dual of an almost affine code does not exist in general}

It is natural to ask the following about dual almost affine codes:
 \begin{itemize}
\item The matroid associated to the dual code should be the dual of the matroid associated to the code.
\item Two equivalent codes should have equivalent duals.
\item The dual of the dual should be the code we started with. 
\end{itemize} 
In addition, the dual of a linear code and of a linear code seen as an almost affine code should coincide.
\begin{remark} In the case of linear codes, we replace the condition on equivalent codes by a stronger condition, namely linear equivalence. It is unknown to the authors if  two linear codes can be equivalent in the wider sense without being linearly equivalent.
\end{remark}

\begin{lemma}\label{lem1} Let $C_1,C_2$ be two equivalent almost affine codes on the alphabet $F$. Then for every $1 \leqslant r \leqslant \dim(C_1)=\dim(C_2)$, the number of $r$-dimensional almost affine subcodes of $C_1$ and $C_2$ are the same.
\end{lemma}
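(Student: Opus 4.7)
The plan is to exhibit an explicit bijection between the $r$-dimensional almost affine subcodes of $C_1$ and those of $C_2$, induced by the equivalence itself.

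First I would unpack the definition of equivalence: there exist a permutation $\sigma \in S_n$ and bijections $\tau_1,\ldots,\tau_n : F \to F$ such that the map
\[\Phi : C_1 \longrightarrow C_2, \qquad \Phi(\bm c) = \bigl(\tau_{\sigma^{-1}(1)}(c_{\sigma^{-1}(1)}),\ldots,\tau_{\sigma^{-1}(n)}(c_{\sigma^{-1}(n)})\bigr),\]
is a bijection of codes. This $\Phi$ is, coordinate by coordinate, the composition of bijections of $F$ with a relabelling of coordinate positions, so it extends naturally to a bijection between arbitrary subsets of $C_1$ and subsets of $C_2$.

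Next, I would verify that $\Phi$ sends almost affine subcodes to almost affine subcodes, preserving dimension. For any subset $D \subset C_1$ and any $X \subset \{1,\ldots,n\}$, the projection satisfies
\[\bigl|\Phi(D)_X\bigr| \;=\; \bigl|D_{\sigma^{-1}(X)}\bigr|,\]
because the coordinate-wise bijections $\tau_i$ do not change cardinalities of projections and $\sigma$ merely renames indices. In particular $\log_{|F|}|\Phi(D)_X| \in \mathbb{N}$ for every $X$ if and only if the same holds for $D$, and taking $X = \{1,\ldots,n\}$ gives $|\Phi(D)| = |D|$, so $\Phi(D)$ has the same dimension as $D$.

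From these observations, $D \mapsto \Phi(D)$ restricts to a bijection from the set of $r$-dimensional almost affine subcodes of $C_1$ to the set of $r$-dimensional almost affine subcodes of $C_2$, with inverse furnished by the (equivalent) data $(\sigma^{-1},\tau_{\sigma^{-1}(1)}^{-1},\ldots,\tau_{\sigma^{-1}(n)}^{-1})$. The two cardinalities therefore coincide. There is no real obstacle in the argument; the only point one needs to be careful about is bookkeeping the indices so that the projection identity $|\Phi(D)_X| = |D_{\sigma^{-1}(X)}|$ is stated with the correct permutation.
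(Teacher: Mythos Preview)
Your proof is correct and is precisely the argument the paper has in mind: the paper's own proof is the single line ``This is obvious by the definition of equivalency,'' and what you have written is a careful unpacking of that obviousness via the explicit bijection $D \mapsto \Phi(D)$. The projection identity $|\Phi(D)_X| = |D_{\sigma^{-1}(X)}|$ is exactly the right point to isolate, and your index bookkeeping matches the paper's convention (apply $\tau_i$ at position $i$, then permute by $\sigma$).
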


\begin{proof}This is obvious by the definition of equivalency.\end{proof}

\begin{lemma}\label{lem2} Let $C_1,C_2$ be two almost affine codes of dimension $1$ on the alphabet $F$ with the same matroid. Then they are equivalent.
\end{lemma}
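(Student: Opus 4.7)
The plan is to exhibit an equivalence with the trivial permutation $\sigma = \operatorname{id}$, by constructing coordinate bijections $\tau_i : F \to F$ whose combined action takes $C_1$ onto $C_2$.

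First I would split the coordinates into the loops $L = \{i : r(\{i\}) = 0\}$ and the non-loops $N = E \setminus L$ of the common matroid $M$. These sets are determined by $M$ alone, so they coincide for the two codes. For any $i \in L$, all codewords of $C_j$ share a common value $a_i^j \in F$. For any $i \in N$, the rank-one condition forces the projection $\pi_i^j : C_j \to F$, $\bm{c} \mapsto \bm{c}_i$, to be a bijection, since $|C_j| = |F| = |(C_j)_{\{i\}}|$.

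Next, because $M$ has rank $1$ the set $N$ is nonempty; I would pick any $i_0 \in N$ and use it to build a single reference bijection $\phi := (\pi_{i_0}^2)^{-1} \circ \pi_{i_0}^1 : C_1 \to C_2$. Using $\phi$, I define $\tau_i : F \to F$ coordinate by coordinate: on each non-loop $i$, set $\tau_i := \pi_i^2 \circ \phi \circ (\pi_i^1)^{-1}$, which is automatically a bijection; on each loop $i$, set $\tau_i(a_i^1) := a_i^2$ and extend arbitrarily to a bijection of the finite set $F$.

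Finally I would verify that applying the $\tau_i$ coordinatewise sends each $\bm{c} \in C_1$ to $\phi(\bm{c}) \in C_2$, so that the combined action maps $C_1$ bijectively onto $C_2$. No serious obstacle is anticipated; the only subtlety is checking that $\tau_i$ is a well-defined function of the $i$-th coordinate alone, which at non-loops is exactly the statement that $\pi_i^1$ is a bijection (a consequence of the rank-one structure), and at loops is trivial. Once this is in place, the definition of equivalence is satisfied with $\sigma = \operatorname{id}$ and the chosen $\tau_i$.
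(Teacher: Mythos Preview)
Your proposal is correct and follows essentially the same approach as the paper's proof. The paper picks a basis $B=\{b\}$ and, for each $x\in E\setminus B$, distinguishes the cases $\sigma(B,x)=\{x\}$ (your loops) and $\sigma(B,x)=\{b,x\}$ (your non-loops), defining $\tau_x$ exactly as you do; your reference bijection $\phi$ is the paper's implicit identification $\bm{w_{1,f}}\mapsto\bm{w_{2,f}}$ via the value at $b=i_0$, and your formula for $\tau_i$ at non-loops unwinds to the paper's $(\bm{w_{1,f}})_x\mapsto(\bm{w_{2,f}})_x$.
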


\begin{proof} Let $B=\{b\}$ be a basis of the matroid. Let $x \in E\-B$. We have two possibilities: \begin{itemize} \item $\sigma(B,x)= \{x\}$.  Let $\bm{w_i} \in C_i$ for $i \in \{1,2\}$. By Proposition~\ref{prop2}, \[\left|C_i(\{x\},\bm{w_i})\right| = |F|^{1-r(\{x\})} = |F| = |C_i|\] so that all words of $C_i$ have the same digit, namely $(\bm{w_i})_x$ at position $x$. Let $\tau_x$ be any permutation of $F$ that sends $(\bm{w_1})_x$ to $(\bm{w_2})_x$.
\item $\sigma(B,x)=\{b,x\}$.  
For every $i \in \{1,2\}$ and $f \in F$, let $\bm{w_{i,f}} \in C_i$ be the unique word such that $(\bm{w_{i,f}})_{\{b\}}=f$. Since $\{x\}$ is a basis of $M_{C_1}=M_{C_2}$, by Corollary~\ref{surbase}, \[\begin{array}{cccc} \tau_x: &F &\longrightarrow& F \\ &\left(\bm{w_{1,f}}\right)_x & \longmapsto & \left(\bm{w_{2,f}}\right)_x \end{array}\] is a permutation.
\end{itemize}The series of permutations $\tau_x$ of the symbols of the alphabet at position $x$ makes $C_1$ equivalent to $C_2$.
\end{proof}

We can now show that the concept of dual of an almost affine code does not exist. Namely, the codes $\C$ and $\C'$ from~\cite[Example 5]{SA}, have the same associated matroid. The dual matroid is the uniform matroid $U_{1,3}$. Therefore, the possible duals $\C^\perp$ and $\C'^\perp$ would be equivalent by Lemma~\ref{lem2}. Thus $\C={\C^\perp}^\perp$ and $\C'={\C'^\perp}^\perp$ would also be equivalent. But this is not possible by Lemma~\ref{lem1} since it is known that $\C$ has $20$ $1$-dimensional almost affine subcodes, while $\C'$ has just $12$ of them.

\subsection{Duality of multilinear codes} \label{five}

In this subsection, we will study an important class of almost affine codes, namely multilinear codes. 

\begin{definition} 
Let $q$ be a prime power and $r,n \geqslant 1$. Let $F$ be the $\Fq$-vector space $\Fq^r$. A multilinear code $C$ is a $\Fq$-linear subspace of $F^n$ such that $\forall X \subset \{1,\cdots,n\}$, $\dim_{\Fq} C_X$ is divisible by $r$.
\end{definition}

\begin{example}
Let $C$ be a $[n,k]$ linear code on the field $\Fq$ with generator matrix $G=\begin{bmatrix}g_{i,j}\end{bmatrix}$. Let $F$ be the $\Fq$-vector space $\Fq^r$ for some $r$. Consider the following interleaving encoding scheme: \[\xymatrix{m_1\ar[d] & \cdots\ar[d] & m_k\ar[d] \\
m_{11} & \cdots & m_{1k} \ar[r]^{\cdot G}& c_{11} & \cdots & c_{1n} \\
\vdots & \ddots & \vdots \ar[r]^{\cdot G}& \vdots & \ddots & \vdots \\
m_{r1} & \cdots & m_{rk} \ar[r]^{\cdot G}& c_{r1}\ar[d] & \cdots\ar[d] & c_{rn}\ar[d] \\
&&&c_1 & \cdots &c_n}\] where $m_i \in F$ is decomposed into $m_{1,i},\cdots,m_{r,i} \in \Fq$. Then every row $[m_{j,1},\cdots,m_{j,k}]$ is encoded via $G$ to a row $[c_{j,1},\cdots,c_{j,n}]$. Now all the columns $c_{l,1} \cdots,c_{l,r}$ forms an element of $F$. This code $C'$ is the row space of the $kr \times rn$ block matrix \[G'=\begin{bmatrix} D_{g_{ij}}^{(r)} \end{bmatrix}\] on $\Fq$, where $D_l^{(r)}$ is the $r \times r$ diagonal matrix with $l$ on the diagonal. This is therefore a multilinear code.
\end{example}

It is shown in~\cite{SA} that a multilinear code $C$ is an almost affine code on the alphabet $F=\Fq^m$. The rank function of the associated matroid is given by \[\rho(X) = \frac{1}{r} \dim_{\Fq}C_X,\ X \subset \{1,\cdots,n\}.\]

By the canonical isomorphism $F^n \approx \Fq^{nr}$, we may think of $C$ as the row space of a $kr \times rn$ matrix $G$ over $\Fq$. The code $C$ can also be seen as a linear code of length $rn$ and rank $kr$ over $\Fq$, and thus as an almost affine code over the alphabet $\Fq$. We denote by $\rho_1$ and $\rho_r$ the rank functions of the almost affine codes $C$ over $F$ and $\Fq$ respectively. For $1 \leqslant x \leqslant n$, we also denote by $x_r$ the set \[x_r=\{(x-1)r+1,\cdots,(x-1)r+r\}\] and if $X \subset \{1,\cdots,n\}$, \[X_r=\bigcup_{x \in X} x_r.\] The rank functions $\rho_1$ and $\rho_r$ are given by \[\rho_1(Y) = \rk_ {\Fq} G_Y\] for $Y \subset \{1,\cdots,rn\}$. Also, for $X \subset \{1,\cdots,n\}$, \[\rho_r(X) = \frac{1}{r} \rk_{\Fq} G_{X_r} = \frac{1}{r}\rho_1(X_r).\]

The goal of this section is to show that a multilinear code $C$ in a natural way has a dual multilinear code. Interpreted as a linear code over $\Fq$, $C$ has a dual linear code $C^\perp$, namely the orthogonal complement of $C$ in $\Fq^{nr}$. Let $H$ be a generator matrix of $C^\perp$. This is a $(rn-kr) \times rn$ matrix over $\Fq$. Then, for $Y \subset \{1,\cdots,rn\}$, \[\rk_{\Fq} H_Y = |Y| +\rk_{\Fq}G_{\{1,\cdots,rn\} \- Y} -kr.\] In particular, for every $X \subset \{1,\cdots,n\}$, \begin{eqnarray*}\rk_{\Fq} H_{X_r} &= &|X_r| +\rk_{\Fq}G_{\{1,\cdots,rn\} \- X_r} -kr\\& =& r|X| + \rk_{\Fq} G_{\left(\{1,\cdots,n\}\-X\right)_r} -kr\\&=&r|X| + r\rho_m( C_{\{1,\cdots,n\}\-X}) -kr\end{eqnarray*} is divisible by $r$, and makes therefore $C^\perp$ a multilinear code.

\begin{remark}
As almost affine codes over the alphabet $\Fq^r$, the codes $C$ and $C^\perp$ have dual matroids. As a consequence, Wei duality holds for $C$ and $C^\perp$
\end{remark}

\section{Generalized Kung's bound}

In~\cite[Lemma 4.24]{K}, Kung gives a bound for the minimum number of codewords of a linear code that are sufficient to cover the whole space. This bound is related to the Singleton defect of the dual linear code. In~\cite{JSV}, this was generalized to find a bound for the number of codewords that are necessary to cover a subspace of the whole space. Both results rely heavily on linear algebra. In this section, we prove a similar result for almost affine codes.

We begin by defining the generalized  critical exponents. 
\begin{definition}\label{critical} Let $C$ be a non-degenerate almost affine code of length $n$. Let $\bm{\tilde{c}} \in C$ and $1 \leqslant i \leqslant n$. Then the $i$-th critical exponent with respect to $\bm{\tilde{c}}$ is \[\gamma_i(\bm{\tilde{c}}) = \min \{j,\ \exists \bm{c_1},\cdots,\bm{c_j} \in C,\ |\bigcup_{l=1}^j Supp(\bm{c_l},\bm{\tilde{c}})| \geqslant i\}.\]
\end{definition}

\begin{remark} If the dimension of $C$ is $k$, then it is obvious that \[\gamma_i(\bm{\tilde{c}}) = 1\ \forall 1\leqslant i\leqslant k\] since there exists at least a word of support $k$. Take namely a basis $B$ of $M_C$, then $C_B = F^{|B|}$ and we can find a word whose $\bm{\tilde{c}}$-support contains $B$.
\end{remark}

In~\cite{SA}, one can find the following result:

\begin{proposition}\label{prop6} The number of codewords in C with given $\bm{\tilde{c}}$-support $X$ is equal to \[\sum_{Y \subseteq X} (-1)^{|X \- Y|} q^{k-r(E \- Y)}\]
\end{proposition}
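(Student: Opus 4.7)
The plan is to apply Möbius inversion on the Boolean lattice of subsets of $X$, using Proposition~\ref{prop2} to handle the "easy" count of codewords with support contained in a given set.

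First I would let $N(X)$ denote the number of codewords $\bm{c} \in C$ with $Supp(\bm{c},\bm{\tilde{c}}) = X$, and let $M(X)$ denote the number with $Supp(\bm{c},\bm{\tilde{c}}) \subseteq X$. The key observation is purely definitional: a codeword satisfies $Supp(\bm{c},\bm{\tilde{c}}) \subseteq X$ if and only if $\bm{c}_i = \bm{\tilde{c}}_i$ for every $i \in E \setminus X$, which is precisely the defining condition of $C(E \setminus X,\bm{\tilde{c}})$ from Definition~\ref{fixed}. Hence
\[
M(X) = |C(E \setminus X, \bm{\tilde{c}})| = q^{k - r(E \setminus X)}
\]
by Proposition~\ref{prop2}.

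Next I would use the partition by exact support to write
\[
M(X) = \sum_{Y \subseteq X} N(Y),
\]
which holds because every codeword with support inside $X$ has its support equal to exactly one subset $Y \subseteq X$. Standard Möbius inversion on the Boolean lattice (the inclusion–exclusion principle) then inverts this relation to give
\[
N(X) = \sum_{Y \subseteq X} (-1)^{|X \setminus Y|} M(Y) = \sum_{Y \subseteq X} (-1)^{|X \setminus Y|} q^{k - r(E \setminus Y)},
\]
which is the desired formula.

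There is really no hard step here: the only thing to check carefully is the identification of $\{\bm{c} \in C : Supp(\bm{c},\bm{\tilde{c}}) \subseteq X\}$ with $C(E \setminus X, \bm{\tilde{c}})$, which is immediate from the definitions. Everything else is formal inclusion–exclusion, with Proposition~\ref{prop2} supplying the closed-form count $q^{k-r(E\setminus Y)}$ for the intermediate quantities $M(Y)$.
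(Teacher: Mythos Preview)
Your argument is correct: the identification of $\{\bm{c}\in C:\,Supp(\bm{c},\bm{\tilde{c}})\subseteq X\}$ with $C(E\setminus X,\bm{\tilde{c}})$ is immediate from Definition~\ref{fixed}, Proposition~\ref{prop2} gives $M(Y)=q^{k-r(E\setminus Y)}$, and M\"obius inversion on the Boolean lattice yields the stated formula. Note that the paper itself does not supply a proof here but simply cites \cite[Proposition~6]{SA}; your write-up is exactly the standard inclusion--exclusion argument one would expect behind that citation.
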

\begin{proof} This is~\cite[Proposition 6]{SA}.
\end{proof}

\begin{corollary} The generalized critical exponents are independent of the chosen word $\bm{\tilde{c}}$.
\end{corollary}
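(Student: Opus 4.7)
The plan is to reduce the claim to Proposition~\ref{prop6}. That proposition states that the number of codewords having a prescribed $\bm{\tilde{c}}$-support $X$ equals a formula whose only ingredients are $X$, $q$, $k$, and the rank function $r$ of the associated matroid $M_C$; in particular it depends on neither the choice of $\bm{\tilde{c}} \in C$ nor, beyond $M_C$, on the code itself. Consequently, the set of subsets of $E$ that occur as $\bm{\tilde{c}}$-supports, namely
$$\mathcal{S}(\bm{\tilde{c}}) = \{Supp(\bm{c},\bm{\tilde{c}}) : \bm{c} \in C\} \subseteq 2^E,$$
coincides for every $\bm{\tilde{c}} \in C$: it is exactly the collection of $X \subseteq E$ for which the Proposition~\ref{prop6} count is positive, and that collection does not mention $\bm{\tilde{c}}$.

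Next I would rewrite the critical exponent entirely in terms of this support collection. In Definition~\ref{critical}, repeating a codeword in the list $\bm{c_1},\ldots,\bm{c_j}$ contributes a duplicate support and so never reduces the $j$ needed for the union to have cardinality at least $i$; so restricting to pairwise distinct supports preserves the minimum, giving
$$\gamma_i(\bm{\tilde{c}}) = \min\!\left\{|T| : T \subseteq \mathcal{S}(\bm{\tilde{c}}),\ \Bigl|\bigcup_{X \in T} X\Bigr| \geqslant i\right\}.$$
The right-hand side depends only on $\mathcal{S}(\bm{\tilde{c}})$, which is independent of $\bm{\tilde{c}}$ by the previous step, so the left-hand side is also independent of $\bm{\tilde{c}}$.

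There is no real obstacle here beyond packaging these two observations; the corollary is essentially a direct reading of Proposition~\ref{prop6}. The only detail worth noting is that whenever one picks a family of subsets $X_1,\ldots,X_j$ from $\mathcal{S}(\bm{\tilde{c}})$ realising the minimum, one does obtain a corresponding family of codewords $\bm{c_1},\ldots,\bm{c_j} \in C$ with $Supp(\bm{c_l},\bm{\tilde{c}}) = X_l$, since by definition each $X_l \in \mathcal{S}(\bm{\tilde{c}})$ is attained by at least one such codeword; this legitimises passing back and forth between the minimum over codewords in Definition~\ref{critical} and the minimum over supports in the reformulation above.
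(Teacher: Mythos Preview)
Your proposal is correct and follows essentially the same route as the paper: both arguments invoke Proposition~\ref{prop6} to conclude that the collection of subsets of $E$ realisable as $\bm{\tilde{c}}$-supports is the same for every $\bm{\tilde{c}}\in C$, and then observe that $\gamma_i$ depends only on that collection. The paper carries this out by explicitly transferring a witnessing tuple $(\bm{c_1},\ldots,\bm{c_j})$ for $\bm{\tilde{c}}$ to one for $\bm{\tilde{d}}$ and concluding by symmetry, whereas you package the same idea via the set $\mathcal{S}(\bm{\tilde{c}})$; the content is identical.
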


\begin{proof} Let $\bm{\tilde{d}} \in C$ be another word. Let $j=\gamma_i(\bm{\tilde{c}})$ and $\bm{c_1},\cdots\bm{c_j}$ be such that \[\left| \bigcup_{1\leqslant l \leqslant j} Supp(\bm{c_l},\bm{\tilde{c}})\right| \geqslant i.\] Let $X_i=Supp(\bm{c_i},\bm{\tilde{c}}).$ By definition, there exists at least one word, namely $\bm{c_i}$ whose $\bm{\tilde{c}}$-support is $X_i$. 
So, by the previous proposition, \[\begin{split}
\lefteqn{\left|\{\bm{w} \in C,\ Supp(\bm{w},\bm{\tilde{d}})=X_i\} \right|} \\& =\sum_{Y \subseteq X_i} (-1)^{|X_i \- Y|} q^{k-r(E \- Y)} \\ &= \left|\{\bm{w} \in C,\ Supp(\bm{w},\bm{\tilde{c}})=X_i\} \right| \\& \geqslant 1\end{split}\]
Thus there exists a word $\bm{d_i} \in C$ such that $Supp(\bm{d_i},\bm{\tilde{d}}) = X_i$. Then \[\left| \bigcup_{1\leqslant l \leqslant j} Supp(\bm{d_l},\bm{\tilde{d}})\right|=\left| \bigcup_{1\leqslant l \leqslant j} Supp(\bm{c_l},\bm{\tilde{c}})\right| \geqslant i,\] and this shows that \[\gamma_i(\bm{\tilde{d}}) \leqslant  \gamma_i(\bm{\tilde{c}})\] and equality comes by symmetry.
\end{proof}

In the sequel, we will therefore omit the reference to a particular word in the critical exponents.

Before stating and proving the main result of this section, we need a lemma on matroid theory.
\begin{lemma}\label{lemmematroide}Let $M$ be a matroid on the ground set $E$. Let $B$ a basis and $x \in E\-B$. Then for every $y \in B$, we have: $B' = B \-\{y\} \cup \{x\}$ is a basis of $M$ if and only if $y \in \sigma(B,x)\-\{x\}$
\end{lemma}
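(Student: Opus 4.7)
The plan is to prove both implications by contradiction, leveraging the uniqueness clause in Definition~\ref{fundamental}: the only circuit contained in $B\cup\{x\}$ is $\sigma(B,x)$. Since $|B'|=|B|=\rho(E)$, in both directions the question of whether $B'$ is a basis reduces to the question of whether $B'$ is independent, i.e., whether $B'$ contains a circuit.

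For the ($\Leftarrow$) direction, I would assume $y\in\sigma(B,x)\setminus\{x\}$ and attempt to show $B'$ is independent. Suppose toward contradiction that $B'$ contains some circuit $C$. Because $B\setminus\{y\}\subset B$ is independent (bases contain no circuit), $C$ cannot be contained in $B\setminus\{y\}$, so $x\in C$. Then $C\subseteq B'\subseteq B\cup\{x\}$, and by the uniqueness of the fundamental circuit, $C=\sigma(B,x)$. But $y\in\sigma(B,x)$ by hypothesis while $y\notin B'$ by construction, contradicting $C\subseteq B'$. Hence $B'$ is independent and, having the right cardinality, is a basis.

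For the ($\Rightarrow$) direction, assume $B'$ is a basis and suppose toward contradiction that $y\notin\sigma(B,x)$. Recalling that $x\in\sigma(B,x)$ (since $\sigma(B,x)$ is a circuit and $B$ is independent, so any circuit inside $B\cup\{x\}$ must use $x$), we get $\sigma(B,x)\subseteq (B\cup\{x\})\setminus\{y\}=B'$. This exhibits a circuit inside the basis $B'$, contradicting its independence.

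There is no real obstacle here; the only point requiring minor care is the reminder, in the ($\Rightarrow$) direction, that $x\in\sigma(B,x)$, which follows immediately from the independence of $B$. Everything else is a direct appeal to the uniqueness of the fundamental circuit and the fact that $|B'|=|B|$.
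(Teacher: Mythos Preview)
Your proof is correct and follows essentially the same approach as the paper: both directions hinge on the uniqueness of the fundamental circuit $\sigma(B,x)$ inside $B\cup\{x\}$. The only cosmetic difference is that the paper phrases the argument via explicit rank computations using axiom (R2), whereas you work directly with independence (a set of cardinality $|B|$ is a basis iff it contains no circuit); the logical content is identical.
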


\begin{proof}
Assume that $B'$ is not a basis. Then $\rho(B') \neq |B'|$, and by a repeated use of axiom $(R2)$, $\rho(B') < |B'|$. By the same axiom again, since $\rho(B)=|B|$, we get successively $\rho(B \-\{y\}) = |B|-1$ and $\rho(B')=\rho(B\-\{y\})=|B|-1=|B'|-1$. This shows that $B'$ contains a circuit, say $\tau$. Of course, this circuit contains $x$, otherwise it is contained in $B$, and a repeated use of axiom $(R2)$ again would show that any subset of $B$ has rank equal to its cardinality. Thus, $\tau$ is a circuit contained in $B \cup \{x\}$, and by Lemma~\ref{fundamental}, $\tau=\sigma(B,x)$. Since $y \not \in \tau$, one way is shown.\\
Assume now that $y \not \in \sigma(B,x)$. Then \[\sigma(B,x) \subset B' = B \-\{y\} \cup \{x\}.\] Since $\rho(\sigma(B,x)) = |\sigma(B,x)|-1$, by a repeated use of axiom $(R2)$ again, \begin{eqnarray*}\rho(B') &=& \rho(\sigma(B,x) \cup (B'\- \sigma(B,x)) \\&\leqslant& \rho(\sigma(B,x)) + |B' \- \sigma(B,x)|\\& \leqslant& |\sigma(B,x)|-1 + |B' \- \sigma(B,x)| \\&\leqslant& |B'|-1\end{eqnarray*} and $B'$ is not a basis.
\end{proof}

\begin{theorem}\label{Kungaa}Let $C$ be a non-degenerate almost affine code of dimension $k$ and length $n$ on the alphabet $F$. Let $k+1 \leqslant i\leqslant n$. Then we have \[\gamma_i \leqslant s^*_{n+1-i}+2\] where $s^*_j$ denotes the $j$-th generalized Singleton defect of $M_C$, \[s^*_j=k+j-d^*_j.\] 
\end{theorem}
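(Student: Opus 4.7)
The plan is to exhibit an explicit family of at most $s^*_{n+1-i}+2$ codewords whose $\bm{\tilde{c}}$-supports cover at least $i$ positions. Set $j=n+1-i\in\{1,\ldots,n-k\}$; with the natural identification $d^*_j=d_j(M_C)$ (the $j$-th matroid weight of $M_C$, which in the linear case specializes to the $j$-th weight of $C^{\perp}$ and recovers Kung's classical $\gamma_n\leq k+3-d(C^{\perp})$ when $j=1$), the bound to prove reads $\gamma_i\leq k+j-d_j(M_C)+2$. By the definition of $d_j(M_C)$, I pick $X\subseteq E$ realizing $|X|=d_j(M_C)$ and $n_{M_C}(X)=j$, so $\rho_{M_C}(X)=d_j-j$. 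I let $I\subseteq X$ be a basis of the restricted matroid $M_C|X$, and extend $I$ to a basis $B$ of $M_C$ by a set $J\subseteq E\setminus X$ of size $|J|=k-(d_j-j)=k+j-d_j$.

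Step one: fundamental cocircuits. For each $y\in B$, the fundamental cocircuit $\sigma^*(B,y)$ is a circuit of $M_C^{*}$, hence by~\cite[Prop.~5]{SA} the $\bm{\tilde{c}}$-support of some $\bm{\tilde{c}}$-minimal codeword $\bm{c}_y\in C$. Using the duality between fundamental circuits and fundamental cocircuits with respect to the basis $B$ (of which Lemma~\ref{lemmematroide} is a manifestation), one verifies the identity $\bigcup_{y\in B'}\sigma^*(B,y)=E\setminus\mathrm{cl}_{M_C}(B\setminus B')$ for any $B'\subseteq B$. Applied with $B'=J$, the $|J|=k+j-d_j$ codewords $\{\bm{c}_y:y\in J\}$ jointly cover $E\setminus\mathrm{cl}_{M_C}(I)=E\setminus\mathrm{cl}_{M_C}(X)$, leaving only the flat $\mathrm{cl}_{M_C}(X)$ possibly uncovered.

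Step two: two auxiliary codewords. I choose two codewords $\bm{c},\bm{c}'\in C$ via Corollary~\ref{surbase} by prescribing their restrictions to the basis $B$: both agree with $\bm{\tilde{c}}$ on all of $J$ (so their $\bm{\tilde{c}}$-supports lie inside $E\setminus J$, and in fact inside $\mathrm{cl}_{M_C}(X)$) and differ from $\bm{\tilde{c}}$ at every coordinate of $I$ (so $I$ is automatically covered by $\bm{c}$ alone). Both codewords then belong to the almost affine subcode $D=C(J,\bm{\tilde{c}})$, which by Proposition~\ref{prop2} has dimension $d_j-j=|I|$ and matroid $M_C/J$. A position $z\in\mathrm{cl}_{M_C}(X)\setminus I$ is missed by both $\bm{c}$ and $\bm{c}'$ precisely when $\bm{c}_z=\bm{c}'_z=\bm{\tilde{c}}_z$, and counting such $z$ via Proposition~\ref{prop6}, combined with the observation (Proposition~\ref{supp}) that $X$ contains exactly $j$ non-redundant circuits of $M_C$, should yield the key inequality that one can choose $\bm{c}|_I$ and $\bm{c}'|_I$ so that at most $j-1$ such bad positions remain.

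The main obstacle is the second step: in the almost affine setting one lacks linear algebra, so the verification that two codewords suffice must proceed via the M\"obius-type formula of Proposition~\ref{prop6} together with the uniform distribution of almost affine codes built into Proposition~\ref{prop2}, as the appropriate nonlinear analogue of the counting argument used in Kung's original proof~\cite[Lem.~4.24]{K} and its generalization~\cite{JSV}. Putting Steps one and two together, the $|J|+2=k+j-d_j(M_C)+2=s^*_j+2$ codewords exhibited have supports whose union misses at most $j-1$ positions of $E$, and hence covers at least $n-(j-1)=i$ positions. This yields $\gamma_i\leq s^*_{n+1-i}+2$, as required.
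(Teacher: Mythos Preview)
Your Step~1 is sound: the identity $\bigcup_{y\in J}\sigma^*(B,y)=E\setminus\mathrm{cl}_{M_C}(I)$ holds (since $\sigma^*(B,y)=E\setminus\mathrm{cl}_{M_C}(B\setminus\{y\})$ and the intersection of these closures over $y\in J$ is $\mathrm{cl}_{M_C}(B\setminus J)=\mathrm{cl}_{M_C}(I)$), and codewords with these cocircuits as $\bm{\tilde{c}}$-supports exist by Proposition~\ref{prop6}. The genuine gap is Step~2, which you do not actually prove---``should yield'' is not an argument---and which, as stated, is false. Take the binary parity-check code $C=\{000,011,101,110\}$, so that $M_C=U_{2,3}$; for $i=n=3$ one has $j=1$, $X=E$, $I=\{1,2\}$, $J=\emptyset$, $D=C$. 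The only codeword differing from $\bm{\tilde{c}}=000$ at every coordinate of $I$ is $110$, so your two auxiliary codewords are forced to coincide, and their union of supports misses position~$3$; yet $j-1=0$ bad positions are permitted. Neither Proposition~\ref{prop6} (which counts codewords of a prescribed support) nor Proposition~\ref{supp} (which relates nullity to non-redundant circuits) speaks to the covering question you need here, and dropping the restriction on $\bm{c}'|_I$ leaves you with an unproved statement about covering the flat $\mathrm{cl}_{M_C}(X)$, which can be all of $E$.

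The paper's argument avoids this difficulty by taking an entirely different route. After normalizing so that $B=\{1,\ldots,k\}$ is a basis and $\bm{\tilde{c}}=\bm{0}$, it works with the ``unit'' codewords $\bm{w^{(l)}}$ (those whose restriction to $B$ is the $l$-th standard vector). The crux is the lower bound $|T_S|\geq d^*_{n+1-i}-(n+1-i)$ for every $(n+1-i)$-subset $S\subset E\setminus B$, where $T_S$ records which $\bm{w^{(l)}}$ are nonzero somewhere on $S$; this is obtained from fundamental circuits via Lemma~\ref{lemmematroide}. A pigeonhole then shows that \emph{any} $t=s^*_{n+1-i}+1$ of the $\bm{w^{(l)}}$ already cover at least $i-k$ positions of $E\setminus B$, and one further codeword $\bm{w_0}$ with $\bm{w_0}|_B=(1,\ldots,1)$ covers $B$. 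No attempt is made to cover a closure with only two codewords.
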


\begin{remark} We recall that the generalized Hamming weights $d_i$ of the almost affine code $C$ are defined as the generalized Hamming weights for the dual $M^*_C$ of $M_C$. From Wei duality, we get the dual generalized Hamming weights $d^*_i$ of the code $C$ - and these do not in  general correspond to the Hamming weights for an almost affine code, since we have not been able to define dual codes of almost affine codes in general. If we think of matroids, these latter weights correspond to generalized Hamming weights for the matroid $M_C$, that is \[d^*_j = \min \{|X|,\ n(X) = j\}.\] In the special case that $C$ is a linear code over $\mathbb{F}_q$, then these $d^*_i$ are the usual Hamming weights for the orthogonal complement $C^{\perp}$, and we obtain (a new proof of)~\cite[Theorem 9]{JSV}.
\end{remark}

\begin{proof}[Proof of Theorem~\ref{Kungaa}] Let $q=|F|$. Without loss of generality, we may assume that the alphabet is $F=\{0,\cdots,q-1\}$, that $\bm{\tilde{c}} = (0,\cdots,0)$, and that $B=\{1,\cdots,k\}$ is a basis of $M_C$. 
By Corollary~\ref{surbase}, there exists for each $1\leqslant j \leqslant k$ a unique word $\bm{w^{(j)}} \in C$ such that $\bm{w^{(j)}}_l=0$ for $l \in \{1,\cdots,k\}\-\{j\}$ and $\bm{w^{(j)}}_j=1$. Now, let $S \subset \{k+1,\cdots,n\}$ be of cardinality $n+1-i$, and set \[T_S=\{l \in \{1,\cdots,k\},\ \exists j \in S,\ \bm{w^{(l)}}_j \neq 0\}.\] We claim that \[|T_S| \geqslant d_{n+1-i}-(n+1-i).\] Indeed, let $j \in S$ and $l \in \sigma(B,y)\-\{y\}$. This latter is non-empty since the code is non-degenerate and thus the matroid $M_C$ has no loops. By Lemma~\ref{lemmematroide}, $B_l=B\-\{j\} \cup \{l\}$ is still a basis of $M_C$. By Proposition~\ref{prop2}, the almost affine subcode $C(B_l,\bm{\tilde{c}})$ is such that \[|C(B_l,\bm{\tilde{c}})|=q^{k-r(B_l)} = 1\] Since $\bm{\tilde{c}} \in C(B_l,\bm{\tilde{c}})$, this means that $\bm{w^{(l)}} \not \in C(B_l,\bm{\tilde{c}})$, and in particular $\bm{w^{(l)}}_j \neq 0$. This shows that  \[\bigcup_{j \in S} \left(\sigma(B,j)\-\{j\}\right) \subset T_S\] and therefore \[|T_S| \geqslant \left| \bigcup_{j \in S} \left(C(B,j)\-\{j\}\right)\right| = \left|\bigcup_{j \in S} \sigma(B,j)\right| -|S|.\] Now, the circuits $\sigma(B,j)$ are non-redundant, so from Proposition~\ref{supp}, we know that \[n\left(\bigcup_{j \in S} \sigma(B,j)\right) \geqslant |S| = n+1-i.\] This in turn implies that \[\left|\bigcup_{j \in S} \sigma(B,j)\right| \geqslant d^*_{n(\bigcup_{j \in S} \sigma(B,j))} \geqslant d^*_{n+1-i},\] the first inequality coming from the definition \[d^*_l=\min\{|X|,\ n^*(X)=l\}\] and the second inequality from the monotony property of generalized Hamming weights.\\
Now, if we take $t=k+n+2-i-d^*_{n+1-i}$ distinct words among $(\bm{w^{(1)}},\cdots,\bm{w^{(k)}})$, say $\bm{w^{(l_1)}},\cdots,\bm{w^{(l_t)}}$, then we claim that \[\left|\bigcup_{1\leqslant s \leqslant t} Supp(\bm{w^{(l_s)}},\bm{\tilde{c}}) \cap \{k+1,\cdots,n\} \right|\geqslant i-k.\] If not, then there would exist at least $n+1-i$ distinct indices $j$ in $\{k+1,\cdots,n\}$ such that \[\forall 1 \leqslant s \leqslant t,\ \bm{w^{(l_s)}}_j = 0.\] Take $S$ to be $n+1-i$ such indices. Then for this particular $S$, we would have \[|T_S| \leqslant k-t<d^*_{n+1-i}-(n+1-i)\] which is absurd.\\
These $t$ words, together with the word $\bm{w_0} \in C$ such that $\bm{(w_0)}_B=(1,\cdots,1)$ gives a $t+1$-tuple whose support has cardinality at least $i$, and this concludes the proof.
\end{proof}

\begin{remark}These bounds are the best that can be found. Linear codes are namely almost affine codes, and in~\cite{JSV}, it is mentioned that for simplex codes, the bounds are reached.
\end{remark}

\begin{example}Let  $C'$ be the code of Example~\ref{running}. Let $\bm{\tilde{c}}= 321$. Then $\gamma_3(\bm{\tilde{c}})=1$ since $Supp(213,\bm{\tilde{c}})=\{1,2,3\}$. We have seen that $d_1(C')=2$ and $d_3(C')=3$, so that by Wei duality, $d^*_1(C') = 3$. The bound of theorem~\ref{Kungaa} says that \[1=\gamma_3(\bm{\tilde{c}}) \leqslant s^*_1(C')+2 =2.\]  
\end{example}

\section{Profiles of almost affine codes and trellis decoding}

In~\cite{Mu}, Muder describes trellis decoding for block codes. In~\cite{F}, Forney defines various dimension/length profiles for linear codes. These profiles give a lower bound for the complexity of the minimal trellis associated to the code, and thus an indication on how well decoding using the Viterbi algorithm will work. 

In this section, we observe how the Viterbi algorithm immediately works for almost affine codes, and we show how the dimension/length profile concept can be generalized to these codes as well and how they are related to the generalized Hamming weights. For self-containment and clarity, we include the trellis decoding algorithm.

\subsection{Dimension/length profiles and generalized Hamming weights}

\begin{definition}
The dimension/length profile of an almost affine code $C$ of dimension $k$ and length $n$ is the sequence $k_i(C)$ for $1 \leqslant i \leqslant n$ where \[k_i(C) = \max \left \{ \begin{array}{c}\dim D,\ D \subset C \textrm{ is an almost affine}\\\textrm{code with }|Supp(D)| \leqslant i\end{array}\right\}.\]
\end{definition}

In the definition above, we can actually restrict to subcodes of the type $C(X,\bm{\tilde{c}})$:

\begin{proposition}Let $\bm{\tilde{c}} \in C$.
We have \[k_i(C) = \max\left\{ \log_q\left|C(X,\bm{\tilde{c}})\right|, |X|=n-i\right\}.\]
\end{proposition}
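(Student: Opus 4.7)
The plan is to prove equality by two inequalities, denoting the right-hand side by $f_i$. The whole argument hinges on Proposition~\ref{prop2}, which tells us that $|C(X,\bm{\tilde{c}})|=|F|^{k-r(X)}$ depends only on $r(X)$ and not on the chosen $\bm{\tilde{c}}$.

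First I would show $f_i\leqslant k_i(C)$. Given any $X\subset E$ with $|X|=n-i$, Proposition~\ref{prop2} says that $D:=C(X,\bm{\tilde{c}})$ is an almost affine subcode of $C$ of dimension $k-r(X)=\log_q|C(X,\bm{\tilde{c}})|$. By construction every word of $D$ agrees with $\bm{\tilde{c}}$ on $X$, so $\mathrm{Supp}(D)\subset E\setminus X$ and hence $|\mathrm{Supp}(D)|\leqslant i$. Therefore $\dim D\leqslant k_i(C)$, and maximizing over $X$ gives $f_i\leqslant k_i(C)$.

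The reverse inequality $k_i(C)\leqslant f_i$ is where the main (mild) subtlety lies: an arbitrary almost affine subcode $D$ need not contain $\bm{\tilde{c}}$, so we cannot directly identify it with a set of the form $C(X,\bm{\tilde{c}})$. I would circumvent this as follows. Take any almost affine subcode $D\subset C$ with $j:=\dim D$ and $|\mathrm{Supp}(D)|\leqslant i$, and pick some $\bm{\tilde{d}}\in D$. Let $Y:=\mathrm{Supp}(D,\bm{\tilde{d}})$. Since every word of $D$ agrees with $\bm{\tilde{d}}$ on $E\setminus Y$, we have $D\subset C(E\setminus Y,\bm{\tilde{d}})$, which by Proposition~\ref{prop2} has cardinality $q^{k-r(E\setminus Y)}$. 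Hence $q^j\leqslant q^{k-r(E\setminus Y)}$, i.e.\ $r(E\setminus Y)\leqslant k-j$.

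Because $|Y|\leqslant i$, we have $|E\setminus Y|\geqslant n-i$, so we can choose some $X\subset E\setminus Y$ with $|X|=n-i$. By monotonicity of $r$ we get $r(X)\leqslant r(E\setminus Y)\leqslant k-j$, and therefore
\[
\log_q|C(X,\bm{\tilde{c}})|\;=\;k-r(X)\;\geqslant\;j\;=\;\dim D.
\]
Thus $f_i\geqslant\dim D$ for every admissible $D$, which yields $f_i\geqslant k_i(C)$ and completes the proof. The only thing that might look like an obstacle is keeping the two reference words $\bm{\tilde{c}}$ and $\bm{\tilde{d}}$ apart, but the fact that $|C(X,\bm{\tilde{c}})|$ is determined by the matroid rank alone resolves it cleanly.
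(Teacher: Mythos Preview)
Your proof is correct and follows essentially the same route as the paper's: both directions rely on Proposition~\ref{prop2}, and in the nontrivial direction you embed $D$ into $C(E\setminus Y,\bm{\tilde d})$ and then pass to a subset $X$ of size $n-i$, which is exactly what the paper does (phrased as enlarging $\mathrm{Supp}(D)$ to a set $Y$ of size $i$ and looking at $C(E\setminus Y,\cdot)$). You are in fact slightly more explicit than the paper about why the choice of reference word does not matter, via the formula $|C(X,\bm{\tilde c})|=q^{k-r(X)}$.
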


\begin{proof}
It is clear that $Supp(C(X,\bm{\tilde{c}})) \subset E\-X$, so that $\left|Supp(C(X,\bm{\tilde{c}}))\right| \leqslant i.$ This proves that \[k_i(C) \geqslant \max\left\{ \log_q\left|C(X,\bm{\tilde{c}})\right|, |X|=n-i\right\}.\]
On the other hand, let $D \subset C$ be an almost affine subcode such that $|Supp(D)| \leqslant i$ and $\dim D = k_i(C)$. Let $X = Supp(D)$ and $X \subset Y \subset E$ be such that $|Y|=i$. Consider $D' = C(E\-Y,\bm{\tilde{c}})$ for any $\bm{\tilde{c}} \in D$. Obviously $|Supp(D')| \leqslant i$ and $\dim D' \geqslant \dim D = k_i(C)$ proving the proposition.
\end{proof}

\begin{corollary}\label{cor13}
We have \begin{eqnarray*}k_i(C) &=& \max\left\{ k-r(X), |X|=n-i\right\}\\& =& k-\min \left\{r(X), |X|=n-i\right\}.\end{eqnarray*}
\end{corollary}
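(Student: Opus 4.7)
The plan is very short because Corollary~\ref{cor13} is an immediate consequence of the preceding proposition combined with Proposition~\ref{prop2}.

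First, I would start from the identity established just above,
\[
k_i(C) = \max\left\{\log_q |C(X,\bm{\tilde{c}})|,\ |X|=n-i\right\},
\]
which holds for any fixed $\bm{\tilde{c}}\in C$. The point is that this characterization has already reduced the search over all almost affine subcodes to the special family $C(X,\bm{\tilde{c}})$ indexed by subsets $X$ of cardinality $n-i$.

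Next, I would invoke Proposition~\ref{prop2}, which tells us that $|C(X,\bm{\tilde{c}})| = |F|^{k-r(X)}$, where $r$ is the rank function of the matroid $M_C$. Since $|F|=q$, taking logarithms gives $\log_q|C(X,\bm{\tilde{c}})|=k-r(X)$. Substituting this into the displayed equation immediately yields
\[
k_i(C) = \max\{k-r(X),\ |X|=n-i\}.
\]
In particular, the right hand side no longer depends on $\bm{\tilde{c}}$, which is consistent with the fact that the left hand side doesn't either.

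Finally, the second equality in the corollary is a formal manipulation: since $k$ is a constant independent of $X$, one has $\max_X(k-r(X)) = k-\min_X r(X)$ over any non-empty collection of subsets $X$, and the collection $\{X\subset E : |X|=n-i\}$ is non-empty for $0\le i\le n$. This finishes the proof. There is no real obstacle here; the work was done in the preceding proposition and in Proposition~\ref{prop2}.
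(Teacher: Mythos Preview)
Your proposal is correct and matches the paper's approach: the paper states the result as an immediate corollary of the preceding proposition (without giving an explicit proof), and your derivation via Proposition~\ref{prop2} and the trivial identity $\max(k-r(X))=k-\min r(X)$ is exactly the intended justification.
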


The dimension/length profile is related to the generalized Hamming weights of the code in the following way:

\begin{proposition}
We have \[d_j(C) = \min\{i,\ k_i(C) \geqslant j\}\] and \[k_i(C) = \max \{j,\ d_j(C) \leqslant i\}.\]
\end{proposition}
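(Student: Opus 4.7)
The two stated identities are dual to one another in a straightforward combinatorial sense: they simply say that the sequences $(d_j(C))$ and $(k_i(C))$ are "conjugate" in the partial order on $\{1,\dots,n\}\times\{1,\dots,k\}$. My plan is to reduce both equalities to the single biconditional
\[
d_j(C)\leqslant i \;\Longleftrightarrow\; k_i(C)\geqslant j,
\]
and then to extract each identity by taking the minimum, respectively the maximum, over the appropriate variable.

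First I would translate both sides into statements about subsets of the ground set $E$. By Proposition~\ref{fix}, $d_j(C)\leqslant i$ is equivalent to the existence of some $X\subset E$ with $r(X)=k-j$ and $|X|\geqslant n-i$. By Corollary~\ref{cor13}, $k_i(C)\geqslant j$ is equivalent to the existence of some $X'\subset E$ with $|X'|=n-i$ and $r(X')\leqslant k-j$. So the biconditional becomes a purely matroid-theoretic claim: such an $X$ exists if and only if such an $X'$ exists.

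The forward direction is immediate: given $X$ with $r(X)=k-j$ and $|X|\geqslant n-i$, any subset $X'\subset X$ with $|X'|=n-i$ satisfies $r(X')\leqslant r(X)=k-j$ by monotonicity of $r$. For the reverse direction, I start with $X'$ of cardinality $n-i$ and rank $\leqslant k-j$, and repeatedly enlarge $X'$ by one element at a time. By axiom (R2), each enlargement increases the rank by $0$ or $1$; since $r(E)=k\geqslant k-j$, I can stop at a superset $X$ of size $\geqslant n-i$ whose rank is exactly $k-j$. This step is the only place where matroid axioms are used, and it is the mildest obstacle — essentially the standard "grow a set to target rank" argument.

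Finally, both formulas drop out of the biconditional. Since $d_j(C)$ is an integer in $\{1,\dots,n\}$, the set $\{i : d_j(C)\leqslant i\}$ equals $\{d_j(C), d_j(C)+1,\dots, n\}$ and has minimum $d_j(C)$; by the biconditional this set coincides with $\{i : k_i(C)\geqslant j\}$, giving $d_j(C)=\min\{i:k_i(C)\geqslant j\}$. Symmetrically, $k_i(C)$ is the maximum of $\{j : k_i(C)\geqslant j\}$, which by the biconditional equals $\{j : d_j(C)\leqslant i\}$, yielding the second identity. The main conceptual point to state clearly is that the generalized Hamming weights are strictly increasing in $j$ (noted just after Definition 3) and the dimension/length profile is non-decreasing in $i$, so both extrema are actually attained.
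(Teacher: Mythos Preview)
Your proof is correct and takes essentially the same approach as the paper: both arguments unwind $d_j(C)$ and $k_i(C)$ via the rank-function characterizations (Proposition~\ref{fix} and Corollary~\ref{cor13}) and use the fact that the rank changes by at most one when a single element is added. The only difference is organizational---the paper proves each identity by a separate chain of equalities, whereas you first isolate the single biconditional $d_j(C)\leqslant i \Leftrightarrow k_i(C)\geqslant j$ and then read off both formulas; the underlying matroid step is identical.
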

\begin{proof}
We have \[\begin{split}\lefteqn{\min\{i ,\  k_i(C) \geqslant j \}} \\&=\min\{i ,\ \max\{\log_q|C(X,\bm{\tilde{c}})| ,\ |X| = n-i\} \geqslant j \}\\&=n- \max\{i ,\ \max\{\log_q|C(X,\bm{\tilde{c}})| ,\ |X| = i\} \geqslant j \} \\&=n- \max\{ |X| ,\ \log_q|C(X,\bm{\tilde{c}})| \geqslant  i\} \\&=n- \max\{ |X| ,\ \log_q|C(X,\bm{\tilde{c}})| = i\} \\&=d_j(C),\end{split}\]the penultimate equality coming from the fact that $\log_q|C(X,\bm{\tilde{c}})|$ decreases by at most $1$ if $X$ is augmented with $1$ element.

Moreover we have:

\[\begin{split}\lefteqn{\max \{j,\ d_j(C) \leqslant i\}}\\&=\max\{j ,\ ( n - \max\{ |X| ,\  \log_q|C(X,\bm{\tilde{c}})|=j\}) \leqslant i \}\\ &=\max\{j ,\ ( \max\{ |X| ,\  \log_q|C(X,\bm{\tilde{c}})| =j\}) \geqslant n-i \}\\&= \max\{ \log_q|C(X,\bm{\tilde{c}})| ,\ |X| \geqslant   n-i \}
\\&= \max\{ \log_q|C(X,\bm{\tilde{c}})| ,\ |X| =  n-i \}\\&=k_i(C).
\end{split}\]
\end{proof}

\subsection{Trellis decoding for almost affine codes}

\begin{definition}
A proper trellis is a labelled directed graph such that the vertices can be partitioned into subsets $V_0, \cdots, V_n$ such that the only possible directed edges are between an element in $V_i$ and an element in $V_{i+1}$. Moreover, $|V_0|=|V_n| =1$, and every vertex in $V_i$ for $1\leqslant i \leqslant n-1$ is connected to at least one vertex in $V_{i-1}$ and one vertex in $V_{i+1}$. It is proper when no two edges from the same vertex have the same label. We say that it represents $C$ if $C$ is equal to the set of concatenations of the labels of the edges of paths from $V_0$ to $V_n$. It is minimal if it has fewer vertices at every stage than any other proper trellis representing $C$.
\end{definition}

Let $C$ be an almost affine code of dimension $k$ and length $n$ on an alphabet of $F$ cardinality $q$. We define a labelled directed graph $G=(V,T)$ in the following way. For $0\leqslant i \leqslant n$, let $C_i=C_{\{1,\cdots,i\}}$. In particular, $C_0=\{\emptyset\}$ and $C_n=C$. We define an equivalence relation $\sim$ on $C_i$ by: for $\bm{v},\bm{w} \in C_i$, let $\bm{v'}, \bm{w'} \in C$ be such that $\bm{v'}_{\{1,\cdots,i\}} = \bm{v}$ and $\bm{w'}_{\{1,\cdots,i\}} = \bm{w}$, 

\[\bm{w} \sim \bm{v} \Leftrightarrow \begin{array}{c}C(\{1,\cdots,i\},\bm{v'})_{\{i+1,\cdots,n\}}\\\shortparallel\\{C(\{1,\cdots,i\},\bm{w'})_{\{i+1,\cdots,n\}}}\end{array}.\] 

It is independent of the choice of $\bm{v'}$ and $\bm{w'}$. In other words, $\bm{v}$ and $\bm{w}$ are equivalent if and only if every ending of a word in $C$ starting with $\bm{v}$ is an ending of a word in $C$ starting with $\bm{w}$, and conversely. We denote by $[\bm{v}]_i$ the equivalence class of $\bm{v}$. Let $V_i= C_i/_\sim,$ for $0\leqslant i \leqslant n.$ In particular, $V_0=\{[\emptyset]_0\}$ and $V_n=\{[\bm{w}]_n\}$ for any $\bm{w} \in C$. The set of vertices of $G$ is then defined by $V= \bigcup_{i=0}^n V_i$. The set of labelled edges is \[T= \left\{ ([\bm{v}]_i,[\bm{w}]_{i+1},\alpha), \begin{array}{c}\exists\bm{v'} \in [\bm{v}]_i,\ \exists \bm{w'} \in [\bm{w}]_{i+1},\\ \bm{w'}=\bm{v'}| \alpha\end{array}\right\},\] where $\bm{v'}|\alpha$  is the concatenation of $\bm{v'}$ and $\alpha$, and $\alpha$ is the label on the edge. One can show that this graph is a minimal proper trellis representing $C$.

\begin{example}\label{exmintrel}
Let $C$ be the code from Example~\ref{running}. Then $V_0=\{[\emptyset]_0\}.$ $V_1=C_1=\{[0]_1,[1]_1,[2]_1,[3]_1\}$. Namely, the ending of the words beginning with $0$ ($00,11,22,33$) are different than the endings of the word starting with $1$ ($01,12,23,30$) and so on. It is different for $V_2$. Namely, all the words beginning with $00,31,22,13$ have the same ending, namely $0$, so they are in the same equivalence class. We get that $V_2=\{[00]_2,[11]_2,[12]_2,[23]_2\}$. Finally, $V_3=\{[000]_3\}$. For the edges, there is for example one edge going from $[\emptyset]_0$ to $[1]_1$, with label $1$. There is also one edge going from $[1]_1$ to $[00]_2$ with label $3$. Namely, $1 \in [1]_1$, $13 \in [00]_2$ and $13=1|3$.
The minimum trellis representing $C$ is the following, where the plain, dotted, dashed and wave arrows are labelled with $0$, $1$, $2$ and $3$ respectively:

\[\xymatrix{ & [0]_1 \ar@{.>}[rrdddddd] \ar@{~>}[rrdddd] \ar@{-->}[rrdd] \ar[rr] && [00]_2\ar[rddd] & \\ \\ & [1]_1\ar@{~>}[rruu] \ar@{.>}[rr] \ar@{-->}[rrdd] \ar[rrdddd]&& [11]_2 \ar@{-->}[rd]& \\ [\emptyset]_0 \ar[ruuu] \ar@{.>}[ru]\ar@{-->}[rd]\ar@{~>}[rddd]&&&& [000]_3 \\& [2]_1\ar@{-->}[rruuuu] \ar[rruu] \ar@{.>}[rr] \ar@{~>}[rrdd] &&  [12]_2 \ar@{~>}[ru] &\\ \\ & [3]_1\ar@{.>}[rruuuuuu] \ar@{~>}[rruuuu] \ar[rruu] \ar@{-->}[rr] && [23]_2\ar@{.>}[ruuu] &}\]
\end{example}
Any trellis representing $C$, and thus this minimal trellis, can be used for decoding, using the Viterbi algorithm (\cite{V}). Given a word $\bm{c} \in F^n$, the algorithm finds the words in $C$ such that their Hamming distance to $\bm{c}$ is minimal. The algorithm runs as follows:

\begin{algorithmic}
\State $W \gets \{\bm{\emptyset}\}$
\For{$1 \leqslant i \leqslant n$}
\State $W' \gets \emptyset$
\ForAll{$[\bm{v}]_i \in V_i$}
\State $H \gets \{\bm{w}|\alpha, \bm{w} \in W, (End(\bm{w}),[\bm{v}]_i,\alpha) \in T\}$
\State $H \gets \{\bm{w} \in H,\ d(\bm{w},\bm{c}_{\{1,\cdots,i\}}) \ minimal\}$
\State $W' \gets W' \cup H$
\EndFor
\State $W \gets W'$
\EndFor
\State \Return $W$
\end{algorithmic}

Here, if $\bm{w} \in C_i$, $End(\bm{w})$ is the unique edge corresponding to the path from $[\bm{\emptyset}]_0$ and label $\bm{w}$. In the previous example, $End(20) = [11]_2$.

We will not do an analysis of the Viterbi algorithm. The idea of why it works is that whenever one comes to a node $[\bm{v}]_i \in V_i$, one can keep the words ending there that have minimal Hamming distance with $\bm{c}_{\{1,\cdots,i\}}$. Namely, all the other words ending there will have a strictly larger Hamming distance in further stages, since the possible endings of all these words are all the same (by definition of the equivalence relation).

\begin{example}\label{exviterbi}
We continue with Example~\ref{exmintrel}. Suppose that we receive the word $320$. In the first loop, we keep all the words of length $1$ (each vertex has just one incoming edge, and it must be kept). In the second loop, we look first at the vertex $[00]_2$. It has $4$ incoming edges, that give the following words: $00,13,22,31$, with Hamming distance $2,2,1,1$ to $32$ respectively. So we just keep the two last ones, namely $22$ and $31$. For the vertex $[11]_2$ we keep the words $02,33$, for the vertex $[12]_2$ we keep the words $12,30$, all of them having Hamming distance  $1$ to $32$. For the vertex $[23]_2$, we keep only $32$, with Hamming distance $0$ to $32$. For the third loop, there are $4$ incoming edges to $[000]_3$, and this leads to the following words to look at: $220,310,022,332,123,303,321$. We keep those with minimal Hamming distance to $322$, namely: $022, 332, 321$.
\end{example}

The complexity of the algorithm is related to the number of vertices at each stage, that is $|V_i|$. Here, we give a minimal bound for this number.

\begin{proposition}
For every $1 \leqslant i\leqslant n$, \[\log_q|V_i| \geqslant k-k_i(C)-k_{n-i}(C).\]
\end{proposition}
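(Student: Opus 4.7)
The plan is to establish the intermediate estimate $|V_i| \geqslant q^{r(P)+r(F)-k}$ where $P=\{1,\ldots,i\}$ and $F=\{i+1,\ldots,n\}$, and then to recognize the right-hand side as $k-k_i(C)-k_{n-i}(C)$ via Corollary~\ref{cor13}.

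By definition of the rank function of $M_C$ one has $|C_P| = q^{r(P)}$, and $C_P$ is partitioned into the equivalence classes that make up $V_i$. The proof therefore reduces to proving the uniform upper bound $|[\bm{v}]_i| \leqslant q^{k-r(F)}$ on the size of an equivalence class, which combined with the partition identity immediately gives $|V_i| \geqslant q^{r(P)}/q^{k-r(F)} = q^{r(P)+r(F)-k}$.

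For that uniform class-size bound, I would fix an equivalence class $[\bm{v}]_i$, pick any $\bm{f} \in C(P,\bm{v'})_F$ (for some lift $\bm{v'} \in C$ of $\bm{v}$), and argue as follows. By definition of the equivalence relation $\sim$, the future set $C(P,\bm{w'})_F$ coincides with $C(P,\bm{v'})_F$ for every $\bm{w} \in [\bm{v}]_i$, so $\bm{f} \in C(P,\bm{w'})_F$ as well; in particular the concatenation $\bm{w}|\bm{f}$ is a codeword of $C$. Varying $\bm{w}$ over $[\bm{v}]_i$ produces pairwise distinct codewords in $C$ all having the same $F$-projection $\bm{f}$, so they lie in $C(F,\bm{c''})$ for any $\bm{c''} \in C$ with $\bm{c''}_F = \bm{f}$. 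Proposition~\ref{prop2} gives $|C(F,\bm{c''})| = q^{k-r(F)}$, which yields the stated bound.

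Finally, Corollary~\ref{cor13} applied with $|X|=i$ yields $r(P) \geqslant k - k_{n-i}(C)$, and applied with $|X|=n-i$ yields $r(F) \geqslant k - k_i(C)$; adding these and taking $\log_q$ of the earlier estimate concludes the argument. The only genuinely delicate step is the class-size bound, which requires using the definition of $\sim$ in the ``forward'' direction (to certify that $\bm{w}|\bm{f}$ is in $C$) and applying Proposition~\ref{prop2} symmetrically to the suffix set $F$ rather than the prefix set $P$; everything else is pigeonhole counting and an immediate rank estimate from Corollary~\ref{cor13}.
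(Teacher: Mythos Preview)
Your argument is correct and follows essentially the same route as the paper's proof: bound each equivalence class $[\bm{v}]_i$ by $q^{k-r(\{i+1,\ldots,n\})}$ via Proposition~\ref{prop2} (using the equivalence relation to show that every $\bm{w}\in[\bm{v}]_i$ extends by a fixed suffix to a codeword of $C$), divide $|C_i|=q^{r(\{1,\ldots,i\})}$ by this bound, and finish with Corollary~\ref{cor13}. The only cosmetic difference is that the paper takes the suffix $\bm{t}$ to be the tail of the chosen lift of $\bm{v}$, whereas you allow any $\bm{f}\in C(P,\bm{v'})_F$; this changes nothing.
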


\begin{proof}
Let $\bm{v} \in C_i$ and $\bm{w} \in C$ such that $\bm{v}=\bm{w}_{\{1,\cdots,i\}}$. Let $\bm{t}=\bm{w}_{\{i+1\cdots,n\}}$. Let $\bm{c} \in C_i$. Then if $\bm{c} \in [\bm{v}]_i$, it implies that $\bm{c}|\bm{t} \in C$. In particular, \[\bm{c}|\bm{t} \in C(\{i+1,\cdots,n\},\bm{w})\] In turn, this implies that \[|[\bm{v}]_i| \leqslant |C(\{i+1,\cdots,n\},\bm{w})| = q^{k-r(\{i+1,\cdots,n\})}.\] Now, $C_i$ is a disjoint union of these equivalence classes, and has cardinality $q^{r(\{1,\cdots,i\})}$ so that we get that \[|V_i| \geqslant \frac{q^{r(\{1,\cdots,i\})}}{q^{k-r(\{i+1,\cdots,n\})}}.\] Thus, by Corollary~\ref{cor13} \begin{eqnarray*}\log_q|V_i| &\geqslant& r(\{1,\cdots,i\}) + r(\{i+1,\cdots,n\})-k \\&\geqslant& \min\{r(X),\ |X|=i\} \\&&+ \min\{r(X),\ |X|=n-i\} - k \\&\geqslant& k-k_i(C)-k_{n-i}(C).\end{eqnarray*}
\end{proof}

\begin{remark}
It would have been beneficial to have upper bounds, and not only lower bounds, for the complexity of the trellis decoding algorithm. But as far as we know, no such non-trivial bounds are known, even for linear codes.
\end{remark}

\section{Wire-tap channel of type II}\label{VI}

In~\cite{OW}, Ozarow and Wyner introduce the wire-tap channel of type II. A sender wants to send $k$ elements of information. In order to do so, the information is encoded into $n$ elements, and sent to the receiver. An intruder is allowed to listen to any $s$ elements of the sent message. The channel is noiseless, so the receiver can decode the message correctly. The authors look at how much information the intruder is able to get. In their paper, they present an encoder/decoder system using linear codes. In~\cite{W}, Wei relates the equivocation (that is, a measure on the minimum of uncertainty for an intruder about the source) of the system to the generalized Hamming weights for the code (and its dual code).

In this section, we extend their results to almost affine codes. We show that we can use almost affine codes to design an encoder/decoder system, and we relate the equivocation of the system to the generalized Hamming weights for the dual of the matroid associated to the almost affine code.

So let $C$ be an almost affine code on the alphabet $F$ with $|F|=q$, of dimension $k$ and length $n$. Without loss of generality, we may assume that the set $B=\{1,\cdots,k\}$ is a basis of the associated matroid $M_C$. Let $\varphi: F^{n-k} \times F^{n-k} \rightarrow F^{n-k}$ be a mapping such that for all $\bm{f} \in F^{n-k}$, $\varphi(\bm{f},.)$ is a bijection and such that \[\begin{split}\forall X \subset \{1,\cdots,n-k\}, \forall \bm{m},\bm{f},\bm{g} \in F^{n-k},\\ \bm{g}|_X = \bm{h}|_X \Leftrightarrow \varphi(\bm{g},\bm{m})|_X = \varphi(\bm{h},\bm{m})|_X.\end{split}\] \begin{remark} All these conditions are true if $\varphi_0: F \times F \longrightarrow F$ is a mapping such that $\varphi_0(x,.): F \longrightarrow F$ is a bijection for every $x \in F$, and $\varphi : F^{n-k} \times F^{n-k} \longrightarrow F^{n-k}$ is defined by \[\begin{split}\lefteqn{\varphi((a_1,\cdots,a_{n-k}),(b_1,\cdots,b_{n-k}))} \\&= (\varphi_0(a_1,b_1),\cdots,\varphi_0(a_{n-k},b_{n-k})).\end{split}\] \end{remark}Extend $\varphi$ to $\tilde{\varphi}: F^n \times F^{n-k} \rightarrow F^n$ in the following way: for every $\bm{f} \in F^n$ and $\bm{g} \in F^{n-k}$, \[\tilde{\varphi}(\bm{f},\bm{g})_i = \left\{\begin{array}{ll} \bm{f}_i & \textrm{ if } 1 \leqslant i \leqslant k,\\ \varphi(\bm{f}|_{E \- B},\bm{g})_{i-k} & \textrm{ otherwise} \end{array}\right.\] For every $\bm{m}\in F^{n-k}$, define \[C_{\varphi,\bm{m}} = \{\tilde{\varphi}(\bm{w},\bm{m}),\ \bm{w} \in C\}.\]
When $\varphi$ is obvious from the context, we will omit it and write $C_{\bm{m}}$ for $C_{\varphi,\bm{m}}$.

\begin{lemma} The sets $\{C_{\bm{m}},\ \bm{m} \in F^{n-k}\}$ form a partition of $F^n$.
\end{lemma}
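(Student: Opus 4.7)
The plan is to show that every element of $F^n$ lies in exactly one $C_{\bm{m}}$, by producing for each $\bm{x}\in F^n$ a unique pair $(\bm{w},\bm{m})\in C\times F^{n-k}$ with $\tilde{\varphi}(\bm{w},\bm{m})=\bm{x}$. The existence of such a pair shows $F^n=\bigcup_{\bm{m}}C_{\bm{m}}$, while its uniqueness shows that the union is disjoint.

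First I would split $\bm{x}$ according to the basis $B=\{1,\dots,k\}$, writing $\bm{x}|_B$ and $\bm{x}|_{E\setminus B}$. Since $\tilde{\varphi}(\bm{w},\bm{m})$ keeps the first $k$ coordinates of $\bm{w}$ untouched, any $\bm{w}\in C$ with $\tilde{\varphi}(\bm{w},\bm{m})=\bm{x}$ must satisfy $\bm{w}|_B=\bm{x}|_B$. By Corollary~\ref{surbase}, the basis property of $B$ forces such a $\bm{w}$ to exist and to be unique, completely independently of $\bm{m}$. This pins down the codeword immediately.

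Next, with $\bm{w}$ fixed, the last $n-k$ coordinates of the equation $\tilde{\varphi}(\bm{w},\bm{m})=\bm{x}$ read $\varphi(\bm{w}|_{E\setminus B},\bm{m})=\bm{x}|_{E\setminus B}$. Here I would invoke the assumption that $\varphi(\bm{f},\cdot)$ is a bijection $F^{n-k}\to F^{n-k}$ for every fixed $\bm{f}\in F^{n-k}$; applied to $\bm{f}=\bm{w}|_{E\setminus B}$, this yields a unique $\bm{m}\in F^{n-k}$ solving the equation. Thus $\bm{x}\in C_{\bm{m}}$ for exactly one value of $\bm{m}$, which is precisely the definition of a partition.

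No step here is really delicate: the argument rests on two one-line inputs, namely Corollary~\ref{surbase} (giving a bijection between $C$ and $F^B$ via $\bm{w}\mapsto\bm{w}|_B$) and the hypothesis that $\varphi$ is a bijection in its second variable. The extra condition on $\varphi$ involving preservation of coordinate restrictions is not needed for this lemma — it will presumably be used later in analysing the equivocation of the wire-tap channel. As a sanity check one can also verify via cardinalities: the map $\bm{w}\mapsto\tilde{\varphi}(\bm{w},\bm{m})$ is injective on $C$ (its first $k$ coordinates agree with $\bm{w}|_B$ and $B$ is a basis), so $|C_{\bm{m}}|=q^k$, and thus $\sum_{\bm{m}\in F^{n-k}}|C_{\bm{m}}|=q^{n-k}\cdot q^k=q^n=|F^n|$, matching the partition statement.
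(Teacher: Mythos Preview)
Your proof is correct and uses exactly the same two inputs as the paper---Corollary~\ref{surbase} to pin down $\bm{w}$ from $\bm{x}|_B$, and the bijectivity of $\varphi(\bm{f},\cdot)$ to pin down $\bm{m}$. The only organizational difference is that you establish existence and uniqueness of the pair $(\bm{w},\bm{m})$ in one pass, whereas the paper first proves disjointness and then closes with a cardinality count; your version is marginally more direct, and your remark that the coordinate-restriction condition on $\varphi$ is not needed here is accurate.
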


\begin{proof}It is obvious that there is a bijection between $C_{\bm{m}}$ and $C$, since $\tilde{\varphi}(.,\bm{m})$ is a bijection when restricted to $C$, since it leaves the coordinates on a basis unchanged.  Now, suppose that $\bm{c}=(c_1,\cdots,c_n) \in C_{\bm{m}} \cap C_{\bm{m'}}$. In particular, we have that  
\[(c_1,\cdots,c_k,c_{k+1},\cdots,c_n) = \tilde{\varphi}(\bm{w},\bm{m}) = \tilde{\varphi}(\bm{w'},\bm{m'})\]
for some words $\bm{w},\bm{w'} \in C$. Then $\bm{w}|B = \bm{w'}|B$, and by Proposition~\ref{prop2}, $\bm{w}=\bm{w'}$. On the other hand, we have \[\begin{split}\lefteqn{\varphi(\bm{w},\bm{m}) = \tilde{\varphi}(\bm{w},\bm{m})|_{E \- B}}\\& = \tilde{\varphi}(\bm{w'},\bm{m'})|_{E \- B}=\tilde{\varphi}(\bm{w},\bm{m'})|_{E \- B} = \varphi(\bm{w},\bm{m'})\end{split}\]which implies that $\bm{m}=\bm{m'}$ since $\varphi(\bm{w},.)$ is a bijection. We conclude by a cardinality argument.
\end{proof}
\begin{lemma} The sets $C_{\bm{m}} \subset F^n$ are almost affine codes with associated matroid $M_C$.
\end{lemma}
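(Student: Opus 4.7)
The plan is to show that the map $\Psi_{\bm{m}} := \tilde{\varphi}(\,\cdot\,,\bm{m}) \colon F^n \to F^n$ is a bijection that ``preserves projections'' in the following precise sense: for every $X \subset E$, the induced map $C_X \to (C_{\bm{m}})_X$ sending $\bm{w}|_X \mapsto \Psi_{\bm{m}}(\bm{w})|_X$ is a well-defined bijection. Once this is established, we immediately get $|(C_{\bm{m}})_X| = |C_X|$ for all $X$, and hence both that $C_{\bm{m}}$ is almost affine (the projections have cardinality a power of $|F|$) and that its rank function agrees with that of $C$, so $M_{C_{\bm{m}}} = M_C$.

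First I would check that $\Psi_{\bm{m}}\colon F^n\to F^n$ is a bijection. By the definition of $\tilde{\varphi}$, this map is the identity on the coordinates indexed by $B=\{1,\dots,k\}$ and acts as $\varphi(\,\cdot\,,\bm{m})$ on the coordinates indexed by $E\setminus B$. Since $\varphi(\,\cdot\,,\bm{m})\colon F^{n-k}\to F^{n-k}$ is a bijection by hypothesis, so is $\Psi_{\bm{m}}$. In particular $|C_{\bm{m}}|=|C|=|F|^k$.

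Next, fix $X\subset E$ and write $X = X_1 \sqcup X_2$ with $X_1 = X\cap B$ and $X_2 = X\cap (E\setminus B)$; let $Y = \{j-k : j\in X_2\}\subset\{1,\dots,n-k\}$. For $\bm{w},\bm{w}'\in C$, the coordinates of $\Psi_{\bm{m}}(\bm{w})$ and $\Psi_{\bm{m}}(\bm{w}')$ on $X_1$ coincide iff $\bm{w}|_{X_1} = \bm{w}'|_{X_1}$, while the coordinates on $X_2$ coincide iff $\varphi(\bm{w}|_{E\setminus B},\bm{m})|_Y = \varphi(\bm{w}'|_{E\setminus B},\bm{m})|_Y$. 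By the defining ``iff'' property of $\varphi$, this last condition is equivalent to $\bm{w}|_{X_2} = \bm{w}'|_{X_2}$. Combining, we obtain the key equivalence
\[
\Psi_{\bm{m}}(\bm{w})|_X = \Psi_{\bm{m}}(\bm{w}')|_X \iff \bm{w}|_X = \bm{w}'|_X,
\]
which shows simultaneously that the induced map on projections is well defined and injective; surjectivity onto $(C_{\bm{m}})_X$ is immediate from the definition of $C_{\bm{m}}$. Thus $|(C_{\bm{m}})_X| = |C_X|$.

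Finally, applying $\log_{|F|}$ gives $\log_{|F|}|(C_{\bm{m}})_X| = r(X)\in\mathbb{N}$ for every $X\subset E$, so $C_{\bm{m}}$ is an almost affine code, and its rank function is exactly the rank function of $M_C$, so $M_{C_{\bm{m}}} = M_C$. The only possible obstacle is keeping the index shift between $E\setminus B = \{k+1,\dots,n\}$ and the domain $\{1,\dots,n-k\}$ of $\varphi$ straight, but this is pure bookkeeping; the real content is the biconditional property that $\varphi$ is assumed to satisfy, which is exactly what makes projections behave the same way before and after applying $\tilde{\varphi}$.
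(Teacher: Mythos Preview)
Your argument is correct and follows essentially the same route as the paper: split $X$ into its parts inside and outside $B$, use the biconditional property of $\varphi$ to show that $\tilde{\varphi}(\cdot,\bm{m})$ preserves agreement on every coordinate set $X$, and conclude that the induced map $C_X \to (C_{\bm{m}})_X$ is a well-defined bijection, so $|(C_{\bm{m}})_X| = |C_X|$ for all $X$.

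One small imprecision: you write that $\varphi(\,\cdot\,,\bm{m})$ is a bijection ``by hypothesis,'' but the stated hypothesis is that $\varphi(\bm{f},\,\cdot\,)$ (second argument varying) is a bijection. What you actually need, namely that $\varphi(\,\cdot\,,\bm{m})$ is a bijection, follows instead from the biconditional property applied with the full index set, which forces injectivity and hence bijectivity on a finite set. This does not affect the validity of the proof.
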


\begin{proof} 
Let $X \subset \{1,\cdots,n\}$ and $Y=X \cap B$, $Z=X \- Y$. We will construct a bijection \[\theta: C_X \longrightarrow \left(C_{\bm{m}}\right)_X\] in the following way: let $\bm{v} \in C_X$ and $\bm{w} \in C$ such that $\bm{w}|_X=\bm{v}$. Then let $\theta(\bm{v}) = \tilde{\varphi}(\bm{w},\bm{m})|_X.$ This is well defined since if $\bm{w},\bm{w'} \in C$ are such that $\bm{w}|_X = \bm{w'}|_X$,  then  $\bm{w}|_Z = \bm{w'}|_Z$. This in turn implies that $\varphi(\bm{w}|_{E \-B},\bm{m})|_Z = \varphi(\bm{w'}|_{E \-B},\bm{m})|_Z$, and thus, combined with the fact that $\bm{w}|_Y = \bm{w'}|_Y$, $\tilde{\varphi}(\bm{w},\bm{m})|_X=\tilde{\varphi}(\bm{w'},\bm{m})|_X$.

This is injective because if $\bm{v_1},\bm{v_2} \in C_X$ are such that $\bm{v_1} \neq \bm{v_2}$, let $\bm{w_1},\bm{w_2} \in C$ be such that $\bm{w_1}|_X=\bm{v_1}$ and  $\bm{w_2}|_X=\bm{v_2}$. Then at least one of the two cases is true: \begin{itemize}
\item $\bm{w_1}|_Y \neq \bm{w_2}|_Y$ and then trivially $\tilde{\varphi}(\bm{w_1},\bm{m})|_X\neq\tilde{\varphi}(\bm{w_2},\bm{m})|_X$
\item $\bm{w_1}|_Z \neq \bm{w_2}|_Z$. Then $\varphi(\bm{w_1}|_{E \- B},\bm{m})|_Z \neq \varphi(\bm{w_2}|_{E \- B},\bm{m})|_Z $, and in turn $\tilde{\varphi}(\bm{w_1},\bm{m})|_X\neq\tilde{\varphi}(\bm{w_2},\bm{m})|_X$.
\end{itemize} 
Surjectivity is obvious by construction.

Then, \[\left|\left(C_{\bm{m}}\right)_X\right| = |C_X|\] which proves the lemma.
\end{proof}

Our scheme is then the following: the encoder wants to send the message $\bm{m} \in F^{n-k}$, and chooses randomly and uniformly any element $\bm{c} \in C_{\bm{m}}$, and sends it. The decoder gets $\bm{c}\in F^n$, finds the unique codeword $\bm{w} \in C$ such that $\bm{w}|_B=\bm{c}|_B$. Then $\bm{m} \in F^{n-k}$ is the unique element such that $\varphi(\bm{w}|_{E \- B},\bm{m})=\bm{c}|_{E \- B}$.

If the message $\bm{t} \in F^n$ is sent over the channel, and an intruder is able to listen to a subset $X \subset \{1,\cdots,n\}$ of the digits of $\bm{t}$, we will now see how much the intruder knows about $\bm{m}$, namely which $\bm{m}$ the sender could possibly have tried to send, and with which probability.

\begin{example}\label{debut}Let $C'$ be the code of Example~\ref{running}. Here the alphabet is $\{0,1,2,3\}$, and we take $\varphi({a},{b})={a}+{b} \ (mod\ 4)$. We want to send the message $\bm{m}=2$. We therefore construct  $C'_{\bm{2}}$: \begin{align*} 002 && 013 && 020 && 031\\103&&110&&121&&132 \\ 200&&211&&222&&233\\301&&312&&323&&330\end{align*} We choose at random any element there, say $121$ and send it to the receiver. The receiver sees that the only word in $C'$ starting with $12$ is $123$, so that the message that was sent is $\bm{m}$ such that $\bm{m}+3=1$, that is $\bm{m}=2$.\\
An intruder able to listen to $1$ digit, say the second, knows nothing about $\bm{m}$. Namely, there are exactly $4$ elements in $C'_{\bm{2}}$ such that the second digit is $2$, but the same is true also for $C'=C'_{\bm{0}}$, $C'_{\bm{1}}$ and $C'_{\bm{3}}$. The same is true if the intruder is able to listen to $2$ digits, say the first and third. There is exactly $1$ word in each of $C'_{\bm{0}}$, $C'_{\bm{1}}$, $C'_{\bm{2}}$  and $C'_{\bm{3}}$ looking like $(1\cdot 1)$, namely $101$, $131$, $121$ and $111$ respectively.
\end{example}

\begin{lemma} Let $\bm{t} \in F^n$ be any word, and $X \subset \{1, \cdots, n\}$. Then we have the following \begin{itemize}
 \item Let $\bm{m} \in F^{n-k}$. Then the set \[\Lambda_{\bm{t},X}(\bm{m})=\{\bm{w} \in C_{\bm{m}},\ \bm{w}_X=\bm{t}_X\}\] is either empty, or has cardinality $\left|F\right|^{k-r(X)}.$
\item \[\left| \{\bm{m} \in F^{n-k},\ \Lambda_{\bm{t},X}(\bm{m}) \neq \emptyset\} \right|= \left|F\right|^{n-k-n(X)}.\]
\end{itemize}
\end{lemma}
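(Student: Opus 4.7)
The plan is to treat the two bullets in sequence, with the first being a direct application of Proposition~\ref{prop2} to the code $C_{\bm{m}}$ and the second being a counting argument that uses the first plus the fact (established in the preceding lemma) that the codes $C_{\bm{m}}$ partition $F^n$.

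For the first statement, I would observe that the previous lemma shows that each $C_{\bm{m}}$ is an almost affine code with associated matroid $M_C$. Suppose $\Lambda_{\bm{t},X}(\bm{m})$ is non-empty and pick any $\bm{w}_0 \in \Lambda_{\bm{t},X}(\bm{m})$. Since $(\bm{w}_0)_X = \bm{t}_X$, the definition of $C_{\bm{m}}(X,\bm{w}_0)$ from Definition~\ref{fixed} gives
\[
\Lambda_{\bm{t},X}(\bm{m}) = \{\bm{w} \in C_{\bm{m}},\ \bm{w}_X = (\bm{w}_0)_X\} = C_{\bm{m}}(X,\bm{w}_0).
\]
Proposition~\ref{prop2}, applied to the almost affine code $C_{\bm{m}}$ whose matroid has rank function $r$, then immediately yields $|C_{\bm{m}}(X,\bm{w}_0)| = |F|^{k-r(X)}$, as required.

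For the second statement, I would count pairs $(\bm{m},\bm{w})$ with $\bm{m}\in F^{n-k}$, $\bm{w}\in C_{\bm{m}}$, and $\bm{w}_X=\bm{t}_X$ in two ways. On the one hand, since the family $\{C_{\bm{m}}\}$ partitions $F^n$, this count equals $|\{\bm{w}\in F^n,\ \bm{w}_X=\bm{t}_X\}| = |F|^{n-|X|}$. On the other hand, grouping by $\bm{m}$ and using the first bullet, the count equals
\[
N\cdot |F|^{k-r(X)},
\]
where $N=|\{\bm{m}\in F^{n-k},\ \Lambda_{\bm{t},X}(\bm{m})\neq\emptyset\}|$. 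Equating the two expressions and using $n(X)=|X|-r(X)$ gives
\[
N = |F|^{n-|X|-k+r(X)} = |F|^{n-k-n(X)}.
\]

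I do not expect a real obstacle here: the whole content is that Proposition~\ref{prop2} passes from $C$ to the codes $C_{\bm{m}}$, which is already known from the preceding lemma, and the rest is bookkeeping. The only mild subtlety is that $\bm{t}$ need not lie in $C_{\bm{m}}$, which is why I replace $\bm{t}$ by an actual element $\bm{w}_0$ of $\Lambda_{\bm{t},X}(\bm{m})$ before invoking the definition of $C_{\bm{m}}(X,\bm{w}_0)$; once that is done the argument is entirely routine.
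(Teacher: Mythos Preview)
Your proof is correct and follows essentially the same approach as the paper: for the first bullet you identify $\Lambda_{\bm{t},X}(\bm{m})$ with $C_{\bm{m}}(X,\bm{w}_0)$ for some $\bm{w}_0\in\Lambda_{\bm{t},X}(\bm{m})$ and apply Proposition~\ref{prop2}, and for the second you count the $|F|^{n-|X|}$ words of $F^n$ agreeing with $\bm{t}$ on $X$ and divide by $|F|^{k-r(X)}$ using the partition property. The only difference is cosmetic---you phrase the second step as counting pairs rather than words---but the substance is identical.
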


\begin{proof}
Let's assume that $\Lambda_{\bm{t},X}(\bm{m}) \neq \emptyset$, and let $\bm{s} \in \Lambda_{\bm{t},X}(\bm{m})$. In particular, $\bm{s} \in C_{\bm{m}}$, and we have \begin{eqnarray*} |\Lambda_{\bm{t},X}(\bm{m})| &=& \left| \{\bm{w} \in C_{\bm{m}},\ \bm{w}_X=\bm{t}_X\}\right| \\&=& \left| \{\bm{w} \in C_{\bm{m}},\ \bm{w}_X=\bm{s}_X\}\right| \\&=&\left|C_{\bm{m}}(X,\bm{s})\right| \\&=& \left|F\right|^{\rk(C_{\bm{m}}) - r_{C_{\bm{m}}}(X)} \\&=& \left|F\right|^{k-r(X)}.\end{eqnarray*} 
For the second point of the proof, since \[|\{\bm{w} \in F^n,\ \bm{w}|_X = \bm{t}|_X\}| = |F|^{n-|X|},\] and all $C_{\bm{m}}$ are disjoint, each such $\bm{w}$ must be in a different set $\Lambda_{\bm{t},X}(\bm{m})$. We conclude using the first point.

\end{proof}

In particular, if $|X|<d_1^* = \min \{|X|,\ n(X)=1\}$, then an intruder that is able to listen to the subset $X$ of digits of $\bm{t}$ gets no information whatsoever on the message $\bm{m}$. Namely, for every $\bm{m'} \in F^{n-k}$, there are exactly $\left|F\right|^{k-|X|}$ words in $C_{\bm{m'}}$ whose restriction to $X$ is $\bm{t}_X$.

A way of measuring how much an intruder gains information is the conditional entropy of the system, namely \[\begin{split}\lefteqn{H(F^{n-k}|T_X)}\\& = - \sum_{\bm{t}_X \in T_X}p(\bm{t}_X) \sum_{\bm{m} \in F^{n-k}} p(\bm{m}|\bm{t}_X)\log_{|F|}p(\bm{m}|\bm{t}_X)\end{split},\] where $T_X$ is the set of possible observations made by the eavesdropper at places $X \subset \{1,\cdots,n\}$.  Now, we assume that all messages $\bm{m}$ have the same probability to be chosen, and then that the sent message $\bm{w} \in C_m$ the same probability to be chosen, so that $p(\bm{t}_X) = \frac{1}{|F|^X}.$ From the previous lemma, we have that  
\[p(\bm{m}|\bm{t}_X) = \begin{cases} 0  &\textrm{if } \Lambda_{\bm{t},X}(\bm{m}) = \emptyset \\ \frac{1}{|F|^{n-k -n(X)}} & \textrm{otherwise}   \end{cases}.\] This gives that \[H(F^{n-k}|T_X) = n-k- n(X).\] 

The system designer is interested in maximizing the equivocation \[E_\mu = \min_{|X|=\mu} H(F^{n-k}|T_X)\] for all possible $\mu \in \{0,\cdots,n\}.$ This way, the designer is assured that no matter which $\mu$ digits an intruder is able to listen to, the uncertainty about the message $\bm{m}$ is at least $E_\mu$. The maximum of information gained by an intruder with $\mu$ taps is therefore \[\Delta_\mu=n-k-E_\mu = \max_{|X|=\mu}\{n(X)\}.\]

By the definition of the generalized Hamming weights for the dual of the matroid $M_C$ associated to the code $C$, \[d_i^* = \min\{|X|, n(X) = i\},\] we get that \[\max_{|X| = \mu} \{n(X)\} = j \Leftrightarrow d_{j}^* \leqslant \mu < d_{j+1}^*,\]with the convention that $d^*_0=0$ and $d^*_{n-k+1}=n+1$.  We get then the following characterization of the equivocation of the system: 

\begin{theorem} The quantity $\Delta_\mu$ of the system described above is entirely determined by the dual generalized Hamming weights for the almost affine code $C$, namely \[d_{\Delta_\mu}^* \leqslant \mu < d_{\Delta_\mu+1}^*\] with the same convention as above.
\end{theorem}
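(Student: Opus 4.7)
The plan is to reduce the theorem to a purely combinatorial statement about the nullity function of the matroid $M_C$, since almost all the probabilistic setup has already been unpacked in the paragraphs immediately preceding the theorem. Indeed, the exposition has already established the identity
\[
\Delta_\mu = n-k-E_\mu = \max_{|X|=\mu}\{n(X)\},
\]
so what remains is to verify the equivalence
\[
\max_{|X|=\mu}\{n(X)\} = j \quad \Longleftrightarrow \quad d_j^* \leqslant \mu < d_{j+1}^*,
\]
under the conventions $d^*_0=0$ and $d^*_{n-k+1}=n+1$.

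First I would prove the $(\Leftarrow)$ direction. Assume $d_j^* \leqslant \mu < d_{j+1}^*$. Since $d_j^* = \min\{|X|,\ n(X)=j\}$, pick a witnessing set $X_0$ with $|X_0|=d_j^*$ and $n(X_0)=j$. By axiom $(R2)$ of a matroid, adding a single element to a set either leaves the rank unchanged (increasing nullity by $1$) or increases the rank by $1$ (leaving nullity unchanged); hence nullity is monotone nondecreasing under inclusion. Extending $X_0$ arbitrarily to some $X$ of size $\mu$ therefore yields $n(X) \geqslant j$, so $\max_{|X|=\mu} n(X) \geqslant j$. Conversely, if some $X$ with $|X|=\mu$ had $n(X)=l$ for some $l\geqslant j+1$, then by definition of the generalized Hamming weights and their strict monotonicity, $d_{j+1}^* \leqslant d_l^* \leqslant |X| = \mu$, contradicting the assumption $\mu < d_{j+1}^*$. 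So $\max_{|X|=\mu} n(X) = j$.

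Next I would prove the $(\Rightarrow)$ direction, which actually drops out of the same two bounds. If $\max_{|X|=\mu} n(X) = j$, then some $X$ of cardinality $\mu$ satisfies $n(X)=j$, hence $d_j^* \leqslant \mu$. Moreover no $X$ of cardinality $\mu$ attains nullity $j+1$, so by the defining minimality of $d_{j+1}^*$ we must have $d_{j+1}^* > \mu$.

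Applying this equivalence with $j := \Delta_\mu = \max_{|X|=\mu} n(X)$ immediately gives $d_{\Delta_\mu}^* \leqslant \mu < d_{\Delta_\mu+1}^*$, as required. The boundary conventions $d^*_0=0$ and $d^*_{n-k+1}=n+1$ cover the extreme cases $\Delta_\mu=0$ (when $\mu$ is small enough that every $X$ is independent) and $\Delta_\mu=n-k$ (when $X=E$). The only mildly delicate point, and the one I would double-check carefully, is the monotonicity of nullity under inclusion used in the extension step; everything else is a direct unwinding of definitions together with the strict monotonicity of $d_j^*$ in $j$ already noted in the preliminaries.
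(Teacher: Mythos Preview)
Your proposal is correct and follows exactly the route taken in the paper: the preceding text already reduces everything to the identity $\Delta_\mu=\max_{|X|=\mu}n(X)$ and then asserts the equivalence $\max_{|X|=\mu}n(X)=j \Leftrightarrow d_j^*\leqslant\mu<d_{j+1}^*$; you simply supply the detailed verification of that equivalence, which the paper leaves implicit. One tiny point: in your $(\Rightarrow)$ direction, the phrase ``by the defining minimality of $d_{j+1}^*$'' is not quite enough on its own to conclude $d_{j+1}^*>\mu$ --- you still need the monotonicity-and-extension step (if $d_{j+1}^*\leqslant\mu$, extend a witness to size $\mu$ and contradict the max), which is precisely the argument you already wrote out for $(\Leftarrow)$.
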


\begin{example}We continue with Example~\ref{debut}. Since the matroid associated to $C'$ is $U_{3,2}$, the nullity function is $0$ everywhere, except that it is $1$ at $\{1,2,3\}$. We therefore find that \[E_0=E_1=E_2 = 1 \Leftrightarrow \Delta_0=\Delta_1=\Delta_2=0\] and \[E_3=0 \Leftrightarrow \Delta_3=1.\] We have seen that $d^*_1(C') = 3$, so that for $\mu<3$, the Theorem gives $\Delta_\mu=0$, while it gives $\Delta_3=1$.
\end{example}

\begin{example} Let $q$ be a prime power, $k \leqslant q-1$ and let $r\geqslant 2$ be  such that $r \mid q-1$ and $r \mid k$. Let $\gamma \in \Fq^*$ be a generator of $\Fq^*$. A generator matrix of the Reed-Solomon code $RS_{q,\gamma,k} \subset \Fq^{q-1}$ is given by \[G=\begin{bmatrix} 1 & 1 & \hdots & 1 \\
\gamma & \gamma^2 & \hdots & \gamma^{q-1} \\
\gamma^2 & \gamma^4 & \hdots & \gamma^{2(q-1)} \\
\vdots & \vdots & \ddots & \vdots \\
\gamma^{k-1} & \gamma^{2(k-1)} & \hdots & \gamma^{(k-1)(q-1)}\end{bmatrix}.\] We consider the $r$-folded Reed-Solomon code $FRS_{q,\gamma,r,k}$ defined in the following way (see~\cite{GR}): let $\phi$ be \[\begin{array}{ccc}  \Fq^{q-1} & \longrightarrow & \left(\Fq^r\right)^\frac{q-1}{r} 
\\ (x_1\cdots,x_{q-1}) & \longmapsto &\left((x_1,\cdots,x_r),(x_{r+1},\cdots,x_{2r}),\cdots\right)
\end{array}.\] Then \[FRS_{q,\gamma,r,k} = \phi(RS_{q,\gamma,k}).\]  This is a block code of length $\frac{q-1}{r}$ on the alphabet $\Fq^r$.

We use the notation of section~\ref{five}. If $X \subset \{1,\cdots,\frac{q-1}{r}\}$, then the submatrix $G_{X_r}$ is a Vandermonde matrix, and as such, we have \[\rk_{\Fq} G_{X_r} = \min \{|X_r|,k\},\] which is obviously divisible by $r$. This shows that the $r$-folded Reed-Solomon code is a multilinear code over $\Fq^r.$  Now, \[\rk_{\Fq} G_{X_r} = \min \{|X_r|,k\}\] which implies\[ \left|{FRS_{q,\gamma,r,k}}_X\right| = q^{\rk_{\Fq} G_{X_r}}= \left\{\begin{array}{ll}  \left(q^r\right)^{|X|} & \textrm{ if } |X|\leqslant \frac{k}{r} \\ \left(q^r\right)^\frac{k}{r} & \textrm{ if }|X| >k\end{array}\right.\] This shows that the matroid associated to the $r$-folded Reed-Solomon code is the uniform matroid  $U_{\frac{k}{r},\frac{q-1}{r}}$ on $\frac{q-1}{r}$ elements and rank $\frac{k}{r}$, and its generalized Hamming weights are \[d_i(FRS_{q,\gamma,r,k}) = \frac{q-1-k}{r} + i\] for $1 \leqslant i \leqslant \frac{k}{r}$ and \[d_i(FRS_{q,\gamma,r,k})^* = \frac{k}{r} + i\] for $1 \leqslant i \leqslant \frac{q-1-k}{r}$. It is therefore an MDS-code. 

Let  $\varphi: \left(\Fq^r\right) ^\frac{q-1-k}{r} \times \left(\Fq^r\right) ^\frac{q-1-k}{r} \rightarrow \left(\Fq^r\right) ^\frac{q-1-k}{r}$ is an application as described above, for example componentwise addition. By the above description of the generalized Hamming weights, an intruder does not get any digit of information if he is able to listen up to $\frac{k}{r}-1$ digits of the sent message, he gets $i$ digits of information if he is able to listen to $\frac{k}{r}+i-1$ digits of the sent message.

If we want to keep the same robustness again intruders with a linear code on a field with the same alphabet size, we have to use an MDS-code over $\mathbb{F}_{q^r}$ (for example a punctured Reed-Solomon code of dimension $\frac{k}{r}$ where we only keep $\frac{q-1}{r}$ columns of a generator matrix). It is easy to see that it gives the same robustness than the scheme presented above, since both are MDS. The benefit of using a folded Reed-Solomon code is that the computations are done over the smaller field $\Fq$ instead of $\mathbb{F}_{q^r}$.

\end{example}

\begin{acknowledgements}

The authors would like to thank IMPA, Rio de Janeiro, where a part of the first named author's work with this article was done, during the special trimester April-June 2015.

The authors would also like to thank the anonymous referee for a series of comments that led to a significant improvement of the article.

\end{acknowledgements}

\end{document}